% SIAM Article Template
\documentclass{siamart171218}
% Information that is shared between the article and the supplement
% (title and author information, macros, packages, etc.) goes into
% ex_shared.tex. If there is no supplement, this file can be included
% directly.
\usepackage{color}
\numberwithin{equation}{section}
\usepackage{mathtools}
\usepackage{amssymb}
\usepackage{amsfonts}
\usepackage{graphicx}
\usepackage{epstopdf}
\usepackage{algorithmic}
\usepackage{enumerate}
\usepackage{accents}
\newcommand\thickbar[1]{\accentset{\rule{.9em}{1pt}}{#1}}

\newtheorem{rmk}{Remark}
\newtheorem{hyp}{Hypothesis}
\newtheorem{cor}{Corollary}

\numberwithin{rmk}{section}
\numberwithin{hyp}{section}
\numberwithin{cor}{section}

\def\ee{\end{equation}}
\def\bea{\begin{eqnarray}}
\def\eea{\end{eqnarray}}

\def\({\left(}
\def\){\right)}
\def\[{\left[}
\def\]{\right]}
\def\r|{\right\vert\right\vert}
\def\l|{\left\vert\left\vert}

\def\0{{\mathbf 0}}

\def\bw{{\mathbf w}}

\def\deg{\mathbf{deg}}

\def\bA{\mathbf A}

\def\bC{\mathbf C}
\def\bD{\mathbf D}

\def\bE{\mathbf E}

\def\bL{\mathbf L}

\def\bM{\mathbf M}
\def\bN{\mathbf N}
\def\bH{\mathbf H}

\def\bQ{\mathbf Q}

\def\bT{\mathbf T}
\def\bS{\mathbf S}

\def\bV{\mathbf V}

\def\be{\mathbf e}
\def\done{\mathbf {d1}}
\def\dtwo{\mathbf {d2}}

\def\b1{{\mathbf 1}}

\def\T{{\mathrm T}}

\newcommand{\TheTitle}{Structure- \& Physics- Preserving Reductions of Power Grid Models} 
\newcommand{\TheAuthors}{C. Grudzien, D. Deka, M. Chertkov, and S. Backhaus}

% Sets running headers as well as PDF title and authors
\headers{Structure- \& Physics- Preserving Reductions of Power Grid Models}{\TheAuthors}

% Title. If the supplement option is on, then "Supplementary Material"
% is automatically inserted before the title.
\title{{\TheTitle}\thanks{Submitted to the editors July 10, 2017.
\funding{This work was supported by funding from the project REDDA of the Norwegian Research Council under contract 250711, and by funding from the U.S. Department of Energy's Office of DOE/OE Transmission Reliability Program as part of the DOE Grid Modernization Initiative.}}}

% Authors: full names plus addresses.
\author{
  Colin Grudzien \thanks{Nansen Environmental and Remote Sensing Center, Bergen Norway
    (\email{colin.grudzien@nersc.no}, \url{http://cgrudz.github.io}).}
  \and
  Deepjyoti Deka\thanks{T-4, Theory Division and the Center for Nonlinear Studies Los Alamos National Laboratory Los Alamos, NM 87545 (\email{deepjyoti@lanl.gov}).}
  \and
  Michael Chertkov\footnotemark\thanks{T-4, Theory Division and the Center for Nonlinear
Studies, Los Alamos Nationl Laboratory, Los Alamos, NM 87545 (\email{chertkov@lanl.gov}).}
  \and
  Scott N Backhaus\thanks{A Division of Los Alamos National Laboratory, Los Alamos, NM 87545 (\email{backhaus@lanl.gov}).}
}

\usepackage{amsopn}

%\input{ex_shared}

% Optional PDF information
%\ifpdf
%\hypersetup{
% pdftitle={\TheTitle},
% pdfauthor={\TheAuthors}
%}
%\fi

% The next statement enables references to information in the
% supplement. See the xr-hyperref package for details.

% FundRef data to be entered by SIAM
%<funding-group>
%<award-group>
%<funding-source>
%<named-content content-type="funder-name">
%</named-content>
%<named-content content-type="funder-identifier">
%</named-content>
%</funding-source>
%<award-id> </award-id>
%</award-group>
%</funding-group>

\begin{document}

\maketitle

% REQUIRED
\begin{abstract}
The large size of multiscale, distribution and transmission, power grids hinder fast system-wide estimation and real-time control and optimization of operations. This paper studies graph reduction methods of power grids that are favorable for fast simulations and follow-up applications. While the classical Kron reduction has been successful in reduced order modeling of power grids with traditional, hierarchical design, the selection of reference nodes for the reduced model in a multiscale, distribution and transmission, network becomes ambiguous.  In this work we extend the use of the iterative Kron reduction by utilizing the electric grid's graph topology for the selection of reference nodes, consistent with the design features of multiscale networks.  Additionally, we propose further reductions by aggregation of coherent subnetworks of triangular meshes, based on the graph topology and network characteristics, in order to preserve currents and build another power-flow equivalent network.

Our reductions are achieved through the use of iterative aggregation of sub-graphs that include general tree structures, lines, and triangles.  Important features of our reduction algorithms include that: (i) the reductions are, either, equivalent to the Kron reduction, or otherwise produce a power-flow equivalent network; (ii) due to the former mentioned power-flow equivalence, the reduced network can model the dynamic of the swing equations for a lossless, inductive, steady state network; (iii) the algorithms efficiently utilize hash-tables to store the sequential reduction steps. The third feature allows for easy re-introduction of detailed models into the reduced, conceptual network, and makes the final reduced order model backward compatible with a sequence intermediate, partially reduced networks with varying resolution --- the ordered sequence of iterative reductions corresponds to a sequence of reduced order models. The performance of our graph reduction algorithms, and features of the reduced grids, are discussed on a real-word transmission and distribution grid.
We produce visualizations of the reduced models through open source libraries and release our reduction algorithms with example code and toy data.
\end{abstract}

% REQUIRED
\begin{keywords}
 power grids, networks, graph reduction, visualization
\end{keywords}

% REQUIRED
\begin{AMS}
 68Q25, 68R10, 68U05, 94C15
\end{AMS}

\section{Introduction}

Power grids consist of the network of transmission and distribution lines connecting generators with end-users, enabling the transfer of electricity. The power grid of North America, in particular, is recognized as the most complicated machine built on earth \cite{greatest1, greatest2}. Topologically the grid is represented by a large, connected graph with nodes denoting buses (loads and generation) and edges representing lines. These nodes and edges are constructed in distinct formations across the physical scales in the problem of power delivery \cite{hines2010topological}. The transmission and distribution sub-networks exist in a hierarchical configuration, where the transmission sub-network consists of high voltage lines connecting generators to substations and the distribution sub-network connects substations to end users \cite{deka2016analytical}. System-wide monitoring and control of the grid involves simulation studies carried out by network authorities like independent system operators (ISO) \cite{huneault1991survey, kraning2014dynamic}. Simulating grid operations relies on accurate state estimation and optimization with respect to power-flow laws, describing interactions across layers of temporal and spatial resolution \cite{nishikawa2015comparative}.

Over time, the grid and its dynamical characteristics undergo changes with the introduction of new loads, generators and network components. The increased penetration of renewable energy, e.g., solar and wind power, has expanded the frontiers of the grid and also made issues regarding grid stability and control of paramount importance \cite{backhaus2013getting, von2013time, dvijotham2014storage}. Dynamic forcing from the distribution grid has historically been much smaller than the transmission components. For example, the amount of inertia and damping in the distribution grid are limited \cite{hoffman2006practical}. However, rooftop solar, the internet of things \cite{katz2017}, and other resources have cultivated the demand for \emph{decentralized} resource generation and control in the distribution sub-grid \cite{farhangi2010path, medina2010demand, dvijotham2012distributed,zhao2015distributed, deka2017structure}. With this demand comes the need for multiscale, dynamical models of the grid.

Owing to the large size and dense interconnections, the control, optimization and dynamical simulation of detailed grids faces implementation issues \cite{bienstock2014chance, bent2013synchronization}. Operational demands require reduced order and approximate schemes to improve the efficiency of computations and simulations of grid operations. However, one must ensure that the model reduction schemes are true to the original grid and have comparable dynamic behavior, or approximate the same.  For designing optimal power-flow and control schemes, it is common to study the transient stability of a reduced order model for the network in consideration --- transient stability in this case refers to the ability of the network to remain synchronous when subjected to large fluctuations in generation or faults in components.  If loss of synchrony appears due to transient instability, it is usually evident within two to three seconds of the initial disturbance \cite{kundur1994power}[see Chapter 13].

The dynamical behavior of the network in transient stability studies is often modeled in terms of swing equations under the assumptions of: (i) purely imaginary line admittances; (ii) lossless power-flow; (iii) constant active and reactive power at load buses;  (iv) constant voltage magnitudes at all generators; and (v) each generator rotor frequency is sufficiently close to the fixed, operating frequency.  The swing equations describe the evolution of perturbations to the rotors' frequency from the steady state on the time scale of seconds.  Particularly, the dynamic swing equations determine the linear stability of the power-flow over this short time scale \cite{simpson2017theory,lokhov2018online}.  Transient stability studies end at the time scale of tens of seconds in which the above assumptions may not hold.  While the swing equations are a substantial reduction to the grid physics, they provide a critical analysis of whether the generators can maintain synchrony at the operating frequency in realistic physical conditions, or if the dynamics will become unstable due to the loss of synchrony in its configuration on short time scales.      

In the case of a lossless, inductive, steady state network, which we will refer to as a transient stability regime, the Kron reduction can be used to produce a reduced order, electrically equivalent model for the network's power-flow \cite{dorfler2010synchronization}.  However, while the synchronization analysis becomes tractable for the reduced order model, authors stress that the direct representation of the synchronization conditions for the full network are often lost \cite{dorfler2013kron}[see section G].  Despite the limitations of this approach, it allows for a physically consistent, and computationally feasible, analysis of the full network and its optimal power-flow and control. Likewise, although it is no longer an electrically equivalent network, the above techniques are commonly used for qualitative study of the dynamics of mid-term stability, with proper modifications and adequate representation of grid dynamics.   Mid-term stability refers to large frequency and voltage deviations, and the network response, on the order of tens of seconds to minutes.  In this intermediate time scale, the dynamics will become increasingly nonlinear due to the dynamic simulation of loads, generation and network and generator control actions, necessary to simulate the actual grid response \cite{kundur1994power}[see Chapter 16]. 

This paper analyses system-aware graph reductions of large power grids, to construct conceptual networks amenable for follow-up action and reanalysis. System-awareness here refers to the use of the graph topology and of parameters such as nodal voltage in the procedure --- by design, we perform our reductions in such a way as to be consistent with the iterative Kron reduction, or otherwise to produce another power-flow equivalent network within the transient stability regime.  We aim to preserve topological features such as presence of graph paths, and particularly the graph's sparsity which is not guaranteed in the Kron reduction process.  Generically, reducing a node with the Kron reduction will replace said node with new lines, forming links between all other nodes to which the reduced one was connected.  As a simple example, reducing the nodes $\{b_1,b_2,b_3\}$ in the left hand side of \cref{fig:krondense} via the Kron reduction produces a complete graph on the right hand side.  Preserving topological characteristics of the network is necessary for our underlying goal which is to develop graph reduction schemes that preserve qualitative features of the original grid's structure and dynamic behavior in the transient stability regime, for accurate state estimation, disturbance prediction and distributed control schemes.

\begin{figure}[ht]\label{fig:krondense}
\center
\includegraphics[width=.7\linewidth]{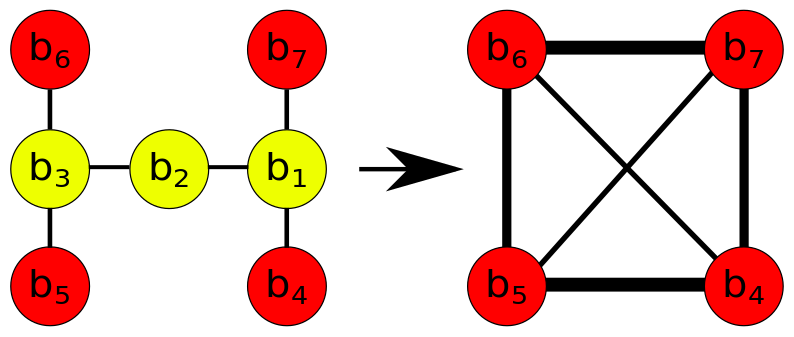}
\caption{The Kron reduction of nodes $\{b_1,b_2,b_3\}$ on the left hand side produces the complete graph on the right hand side.}
\end{figure}

\subsection{Contribution}
\label{sec:main}

There is extensive research into reduction algorithms for improving the analysis of large networks and reducing the computational burden therein. Community detection approaches use graph based methods to collapse sub-networks into smaller, representative components, e.g. Kannan et. al. analyze criteria for effective clustering approaches in relation to the spectrum of the graph Laplacian \cite{kannan2004clusterings}; Newman develops reduction methodology in terms of the node-group connectivity measure of modularity \cite{newman2004fast}. While community detection methods have applications in power systems, these approaches are not in of themselves appropriate for constructing a dynamically consistent reduced order model.  The work of Huo \& Cotilla-Sanchez seeks to \cite{huo2018power} preserve dynamical features of clustered communities by scoring the clusters based on power flow characteristics and applying an evolutionary algorithm.  Other works in circuit design have focused on network reductions which preserve static power-flow computations, e.g. Zhou et. al. study block based hierarchical graph reduction schemes for fast solution to power-flow for in-chip circuits \cite{reduction2}; Wang provides a deterministic random walk based pre-processing and graph reduction algorithm also aimed at solving the DC power-flow problem \cite{reduction1}; Chen \& Chen present Krylov-subspace iterative methods for preconditioning \cite{reduction4}.

Notable power systems reduction methodologies, designed to preserve the dynamical characteristics of the network, include the slow coherence techniques of Chow \& Kokotovic \cite{chow1985time,chow2013power} and the classical Kron reduction of Gabriel Kron \cite{kron1939tensor}.  Slow coherency utilizes the underlying structure of large power grid networks which contain subnetworks of weakly coupled and strongly coupled coherent groups of nodes.  In regional power grid networks, there are large load centers in big cities served by large generating stations, often far away from the load, utilizing high voltage transmission systems.  However, for practical purposes in balancing load demands, regional operating authorities exchange power between regions using weakly coupled, inter-area lines to share base load and reserves.  Coherent sets emerge physically as strongly coupled regional areas which have dense interconnection compared to the weak coupling formed by sparse interregional lines, typically with higher impedance or heavy loading.  

However, while the interregional coupling may be weak on a fast time scale, the long term dynamical behavior of inter-area machines is often strongly coupled on a long time scale.  Slow coherency is the phenomena in which groups of machines form coherent sets interregionally on long time scales, swinging against each other at oscillatory frequencies slower than the frequencies of machines within the a single densely connected region.  Slow coherence based reductions construct reduced order models by aggregating nodes within coherent sets formed in fast time scales, and deriving the reduced order dynamics from the underlying fast-slow, time scale separation \cite{chow2013slow}.
        
The Kron reduction has been applied extensively in power systems analysis with success in control and optimization problems, and D\"{o}rfler \& Bullo in particular, provide a detailed mathematical analysis of the classical Kron reduction for the use in control and monitoring of smart grids \cite[and references therein]{dorfler2013kron}. Given a choice of reference nodes, the Kron reduction uses Gaussian elimination to pare down the full network to a reduced model that is power-flow equivalent from the perspective of the references.  The selection of reference nodes is unambiguous for transmission networks under a classical, hierarchical distribution design. However, the deployment of decentralized generation and storage in the distribution sub-grid makes the selection of the references problematic: while individual distribution nodes do not provide significant generation, the aggregation of these can strongly impact the optimal power-flow and control problem. 

In order to utilize the Kron reduction analysis for a multiscale, distribution and transmission network, we propose graph topological methodology to select the reference nodes in the iterative reduction of the network.  Similar to the motivation of slow coherence approaches, we seek to exploit the underlying structure and design of multiscale electric grid networks to inform our choices when aggregating nodes into representative, but simplified models.  Utilizing the network topology, and the typical electric grid design features which characterize this topology, we extend the use of the iterative Kron reduction, automatically selecting reference nodes in such a way as to respect the dynamics and qualitative features of multiscale, distribution and transmission, electric grid networks.  To produce further reductions beyond the Kron reduction, we utilize the network topology to aggregate topologically coherent sets of nodes of similar voltage, to produce power-flow equivalent, reduced order models in the transient stability regime.

Our key contribution is developing a series of sequential and invertible graph reduction algorithms, and demonstrating the viability of these techniques on a real, electric grid in a US Midwest utility. Our methodology emphasizes three design features: (i) the reductions are system aware, respecting the network topology of multiscale electric grids; (ii) they are power-flow consistent in the sense that each sequential map is either equivalent to a step in of the iterative Kron reduction, or otherwise aggregates nodes into a power-flow equivalent network; and (iii) sub-sequences of the iterative reductions can be inverted, to produce intermediate resolution models for the network.  System-awareness enforces that the reduced network respects the power-flow of the full network, but the invertibility of the reduction allows users to give complex dynamical features increased resolution by inverting the nodal aggregation post-facto.  The ordered sequence of reductions corresponds to a sequence of reduced order, power-flow equivalent models, which represent the network features at intermediate scales.  By utilizing the graph topology, and by maintaining network characteristics of graph paths and nodal voltage thresholds, our reductions furthermore preserve the strong and weak coupling of coherent subsystems, present in large scale power grids.  We therefore suggest, though it goes beyond the scope of the work, that our topologically based reductions are compatible with the dynamical properties of slow coherency and our reduced network may be further post-processed by these techniques.

Unstructured data, describing the placement of clustered nodes in the final reduced model, can be used to parameterize net power-flow.  Specifically, the nodes present in the final reduction can be used as reference nodes for the Kron reduction, and/or to describe simple nodal aggregation and its respective power-flow.  However, this unstructured data is insufficient to increase the resolution on a specific nodal cluster.  Effective implementation of data structures, tracking the sequence of reductions, has been an integral component of our work: our design enforces backwards compatibility, with intermediate, partially reduced representations of network features --- by tracking the sequence of reductions, one can select a different set of reference nodes in the sequence of reduced networks to produce an intermediate scale model.  Implementing these techniques on a real, multiscale electric grid, we present the results of our analysis, studying the graph characteristics of reduced networks. We interactively visualize the reduced network, likewise utilizing graph-topology, rather than geographical location to qualitatively analyze the results.

In \cref{sec:alg}, we present our main results, including our reduction algorithms and the analysis of their performance on the real, multiscale network. In \cref{sec:visualization} we demonstrate conceptual visualizations the reduced, case study network, and the nodal clustering produced by the algorithms. We detail our use of data structures in \cref{sec:data}, explaining how to invert the algorithms to increase the resolution post-facto. Utilizing the reduced network for on-line, dynamic modeling of multiscale electric grids is discussed in \cref{sec:conclusions} where we introduce future directions of research. Finally, example code and test data are available as supplementary material on-line \cite{grudziengithub2017} with interactive visualizations available in web browsers \cite{grudzienpersonalpage}.

\subsection{Notations and preliminaries}
In the following, we draw on the work of D\"{o}rlfer \& Bullo \cite{dorfler2013kron} to define the graph Laplacian and the loopy Laplacian (nodal admittance matrix), utilized in computing the power-flow of an electric grid network, and the equivalent power-flow for the \emph{Kron reduced} model.  The networks under consideration in this work will be understood in terms of algebraic, connected graphs, with a \textbf{node set} denoted $\bN$, of order $\vert \bN\vert = n<\infty$, and \textbf{edge set} $\bE \subset \bN \times \bN$.  Each node $b_i \in \bN$ will represent a generator or load bus in a regional, multiscale electric grid, with nominal voltage $v_i \in (0,1000)$ kilovolts.  We will identify each node $b_i$ with its index $i$ interchangeably in the text.  Edges in the network are undirected, giving rise to a symmetric \textbf{adjacency matrix}, $\bA\in \mathbb{C}^{n\times n}$.  The adjacency matrix, used to describe the power-flow, includes self loops, i.e., $A_{ii} \neq 0$, representing the shunt admittance at the bus $b_i$.  The shunt admittances describe loads in the network drawing a current.  The non-diagonal elements of $\bA$, $A_{ij} = A_{ji}$, denote line admittances used to describe the power-flow between nodes $b_i$ and $b_j$.  The injections and demands of \textbf{currents} are represented by a vector $\bC \in \mathbb{C}^{n}$, while \textbf{nodal voltages} are described in a vector form by $\bV \in \mathbb{C}^n$.
\begin{hyp}\label{hyp:inductive}
Assume that the adjacency matrix $\bA \in \mathbb{C}^{n\times n}$ defines a connected graph.  Moreover, assume that all non-zero entries of the adjacency matrix $\bA$ are inductive, i.e., that they are pure-imaginary and negative, and that at least one diagonal element $A_{ii}\neq 0$ for $1 \leq i \leq n$.
\end{hyp}

For an arbitrary $n\times n$ matrix $\bM$, we denote the entry in the $i$-th row and $j$-th column equivalently as $M[i,j]=M_{i,j}$.  We will define the \textbf{weighted degree} matrix $\bD$, the \textbf{graph Laplacian} $\bL$ and the \textbf{loopy Laplacian} $\bQ$, such that
\begin{align}
\bD &\triangleq {\rm diag} \(\left\{\sum_{k=1}^n A_{i,k}\right\}_{i=1}^n\),\label{eq:degreestrength} \\
\bL &\triangleq \bD - \bA \label{eq:laplacian},\\
\bQ &\triangleq \bL + {\rm diag} \(\left\{A_{i,i}\right\}_{i=1}^n\).\label{eq:loopy}
\end{align}
\begin{rmk}
The additional presence of self loops in the adjacency matrix $\bA$ can instead be used to model an equivalent, augmented circuit including a ground node; this produces an augmented $(n+1) \times (n+1)$ Laplacian, where all the self loops are attached to the ground defined in terms of the sum of all shunt admittances in the adjacency matrix \cite{dorfler2013kron}.
\end{rmk}   

\begin{lemma}\label{lemma:invertibleadjacency}
If the adjacency matrix satisfies \cref{hyp:inductive} then the loopy Laplacian is invertible.
\end{lemma}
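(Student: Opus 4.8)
The plan is to reduce the problem to the positive-definiteness of an associated real symmetric matrix. First I would read off the entries of $\bQ$ from \cref{eq:degreestrength,eq:laplacian,eq:loopy}. For $i\neq j$ the degree matrix $\bD$ contributes nothing off-diagonal, so $Q_{i,j}=L_{i,j}=-A_{i,j}$; on the diagonal the self-loop term added back in \cref{eq:loopy} exactly cancels the $-A_{i,i}$ buried in $\bL=\bD-\bA$, leaving $Q_{i,i}=\sum_{k=1}^{n}A_{i,k}$. Thus $\bQ$ is the nodal admittance matrix, and it is (complex) symmetric since $\bA$ is. Writing $\mathrm{i}=\sqrt{-1}$ and invoking \cref{hyp:inductive}, every nonzero entry of $\bA$ is of the form $A_{i,j}=-\mathrm{i}\,\beta_{i,j}$ with real $\beta_{i,j}\ge 0$, where $\beta_{i,j}>0$ exactly when $b_i,b_j$ are adjacent and $\beta_{i,i}$ is the shunt susceptance at $b_i$. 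Factoring the common scalar out, I would write $\bQ=-\mathrm{i}\,\bM$, where $\bM$ is the real symmetric matrix with $M_{i,j}=-\beta_{i,j}\le 0$ off-diagonal and $M_{i,i}=\sum_{k=1}^{n}\beta_{i,k}$. Because $-\mathrm{i}\neq 0$, invertibility of $\bQ$ is equivalent to invertibility of $\bM$.

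Next I would split $\bM=\bL_{\beta}+\diag\!\left(\{\beta_{i,i}\}_{i=1}^{n}\right)$, where $\bL_{\beta}$ is the ordinary (loop-free) weighted graph Laplacian assembled from the off-diagonal susceptances $\beta_{i,j}$. The Laplacian $\bL_{\beta}$ is symmetric positive semidefinite with the standard quadratic-form identity $x^{\T}\bL_{\beta}x=\sum_{i<j}\beta_{i,j}(x_i-x_j)^2\ge 0$, and, the graph being connected by \cref{hyp:inductive}, its kernel is exactly $\mathrm{span}\{\mathbf{1}\}$. The second summand is a nonnegative diagonal matrix, hence also positive semidefinite, so $\bM$ is positive semidefinite and it remains only to show that its kernel is trivial.

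The crux is this kernel analysis. Suppose $\bM x=0$ for some real vector $x$; then $0=x^{\T}\bM x=\sum_{i<j}\beta_{i,j}(x_i-x_j)^2+\sum_{i=1}^{n}\beta_{i,i}x_i^2$, a sum of two nonnegative terms, each of which must therefore vanish. Vanishing of the first term forces $x\in\mathrm{span}\{\mathbf{1}\}$, i.e.\ $x$ is constant across the (connected) network; vanishing of the second forces $x_i=0$ at every node carrying a nonzero shunt. Since \cref{hyp:inductive} guarantees at least one diagonal entry $A_{i,i}\neq 0$, equivalently one $\beta_{i,i}>0$, a constant vector vanishing at that node must be identically zero. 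Hence $\ker\bM=\{0\}$, so $\bM$ is positive definite and in particular $\det\bM\neq 0$; then $\det\bQ=(-\mathrm{i})^{n}\det\bM\neq 0$ and $\bQ$ is invertible.

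I expect the main obstacle to be the interaction of two separate degeneracies rather than any single computation: the graph Laplacian $\bL_{\beta}$ is singular on constant vectors, and the shunt diagonal is singular on vectors supported away from the shunts. Invertibility hinges on these two kernels meeting only at the origin, which is precisely where connectedness and the existence of at least one shunt admittance in \cref{hyp:inductive} must be used together; neither hypothesis alone suffices. As an alternative to the quadratic-form argument one could observe that $\bM$ is weakly diagonally dominant with row sums $\sum_{j}M_{i,j}=\beta_{i,i}\ge 0$ and is irreducible by connectedness, with at least one strictly dominant row where $\beta_{i,i}>0$; nonsingularity then follows from the standard theorem on irreducibly diagonally dominant matrices.
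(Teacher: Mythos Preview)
Your proof is correct. The paper's own proof is the brief alternative you sketch in your final paragraph: it observes directly that $\bQ$ is irreducible (by connectedness) and diagonally dominant with at least one strictly dominant row (from the nonzero shunt), then cites Corollary~6.2.27 of Horn \& Johnson on irreducibly diagonally dominant matrices. The paper applies this directly to the complex matrix $\bQ$ without first factoring out $-\mathrm{i}$.

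Your primary route---factoring $\bQ=-\mathrm{i}\,\bM$, splitting $\bM$ into a graph Laplacian plus a nonnegative shunt diagonal, and analyzing the quadratic form---is a genuinely different argument. It is more self-contained (no external citation needed) and makes explicit \emph{why} both hypotheses in \cref{hyp:inductive} are necessary: connectedness pins the Laplacian kernel to constants, and the single shunt kills that constant. It also yields the stronger conclusion that $\bM$ is positive definite, not merely nonsingular, which can be useful downstream. The paper's approach is shorter and works verbatim for $\bQ$ over $\mathbb{C}$, but leans on a cited result whose proof ultimately encodes the same kernel-intersection idea you spelled out.
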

\begin{proof}
Due to the connectivity of the graph, the matrix $\bQ$ is irreducible \cite{dorfler2013kron}.  But clearly, $\bQ$ is also diagonally dominant, with at least one diagonal element strictly dominant.  By Corollary 6.2.27 of Horn \& Johnson \cite{horn1990matrix}, $\bQ$ is invertible.
\end{proof}

The above defined loopy Laplacian describes the classical matrix of \textbf{nodal admittances}.  The diagonal elements of the loopy Laplacian are defined to be the \textbf{self-admittances}, equal to the sum of all admittances terminating at the associated node.  The off diagonal elements in the loopy Laplacian are equal to the negative of the associated line admittances.  Using the loopy Laplacian in equation \cref{eq:loopy}, we define the \textbf{current balance equations} as the matrix form of Ohm's law,
\begin{align}
\bC = \bQ \bV.\label{eq:currentbalance}
\end{align}
Likewise, we define the \textbf{power-flow equations} as
\begin{align}
\bS =  \bV \circ \thickbar{\bC} \label{eq:powerflow}
\end{align}
where $\circ$ is the Hadamard product, and $\thickbar{\bM}$ is the complex conjugate of the matrix $\bM$.  The vector $\bS$ is defined as the vector of \textbf{power injections}.  The sum of all power injections is defined to be the \textbf{net power}.
\begin{hyp}\label{hyp:lossless}
We will assume that the power-flow is lossless and net power is balanced, 
\begin{align}
\sum_{j=1}^n S_j = 0,
\end{align}
i.e., the sum of all the power injections is equal to zero.
\end{hyp}
\begin{rmk}
For steady state dynamics studies involving small injection fluctuations, the majority of which are at high voltage unreduced nodes, this hypothesis is a common approximation, if not precisely satisfied.
\end{rmk}

The loopy Laplacian $\bQ$ has the elementwise definition,
\begin{align}
Q_{i,j} = 
\begin{cases}
-A_{i,j} & {\rm if}\hspace{2mm} i \neq j\\
\sum_{k=1}^n A_{i,k}& {\rm if}\hspace{2mm} i=j,
\end{cases}\label{eq:loopyelement}
\end{align}
such that
\begin{align}\begin{split}
Q_{i,i} &= \sum_{k=1}^n A_{i,k}\\ 
       &= A_{i,i} -\( \sum_{k \in \{1,\cdots,n\} \setminus \{i\} } Q_{i,k}\).
\end{split}\label{eq:qdiag}
\end{align}
Therefore, we can always recover the adjacency matrix of a graph (and thus the full graph) from the associated loopy Laplacian via
\begin{align}
A_{i,j} = 
\begin{cases}
-Q_{i,j} & {\rm if}\hspace{2mm} i \neq j\\
\sum_{k=1}^n Q_{i,k}& {\rm if}\hspace{2mm} i=j.
\end{cases}\label{eq:adjacencyelement}
\end{align}

Given an arbitrary $n\times n$ matrix $\bM$, and some index set $\alpha = \{1, \cdots, m\}$ where $1<m<n$, we decompose the matrix $\bM$,
\begin{align}
\bM = 
\begin{pmatrix}
\bM_{[\alpha, \alpha]} & \bM_{[\alpha, \alpha)} \\
\bM_{(\alpha, \alpha]} & \bM_{(\alpha, \alpha)}
\end{pmatrix},\label{eq:matrixdecomp}
\end{align} 
such that $\bM_{[\alpha, \alpha]}, \bM_{[\alpha, \alpha)}, \bM_{(\alpha, \alpha]}, \bM_{(\alpha, \alpha)}$ are of dimensions $m \times m$, $m \times (n-m)$, $(n-m)\times m$ and $(n-m) \times (n-m)$ respectively. 
For an arbitrary $n\times 1$ vector $\bw$, we similarly define $\bw_{[\alpha]}$ to be the sub-vector of the elements indexed by $\alpha$ and $\bw_{(\alpha)}$ to be the sub-vector of elements indexed by the elements of $\{1, \cdots , n\} \setminus \alpha$.
 
Using the above operators, we can define the \textbf{Kron reduction} abstractly via the Schur complement of the loopy Laplacian $\bQ$ with respect to a sub-matrix corresponding to the nodes to be reduced.  In particular, define the index set $\alpha = \{1, \cdots , m\}$ such that $1 < m < n$, corresponding to a set of reference nodes --- we will denote the complementary index set $\{m+1, \cdots , n\}$ the interior nodes.   Then, the \textbf{Kron reduced loopy Laplacian} is given by
\begin{align}\label{eq:kronll}
\bQ^{\rm red} \equiv \bQ/\bQ_{(\alpha, \alpha)} \triangleq \bQ_{[\alpha, \alpha]} - \bQ_{[\alpha, \alpha)} \bQ_{(\alpha,\alpha)}^{-1} \bQ_{(\alpha, \alpha]},
\end{align}
where the notations for the sub-matrices are defined in equation \cref{eq:matrixdecomp}.  Note, up to re-indexing the nodes in the network (and associated shifts in the adjacency/ Laplacian matrices), the above reduction can be performed with respect to any $\alpha \subsetneq \{1,\cdots,n\}$, $\vert \alpha \vert > 1$, provided $\bQ_{(\alpha, \alpha)}$ is nonsingular.

Using equation \cref{eq:adjacencyelement} we see that the Kron reduced network's adjacency matrix can be reconstructed from the reduced loopy Laplacian $\bQ^{\rm red}$, and D\"{o}rfler \& Bullo \cite{dorfler2013kron} prove that this reduction is well defined.  Moreover, the authors show that the \textbf{Kron reduced current balance and power-flow} equations are given as
\begin{align}
\begin{split}
\bC^{\rm red} & =\bQ^{\rm red} \bV_{[\alpha]}\\
 & =\bC_{[\alpha]} + \bQ^{\rm ac} \bC_{(\alpha)}  
\end{split}\label{eq:currentbalancered} \\
\bS^{\rm red}& = \bV_{[\alpha]} \circ \thickbar{\bC}^{\rm red}, \label{eq:powerflowred} 
\end{align}
respectively, where the \textbf{accompanying matrix} $\bQ^{\rm ac}$ 
\begin{align}
\bQ^{\rm ac} &\triangleq - \bQ_{[\alpha,\alpha)} \bQ^{-1}_{(\alpha, \alpha)}\label{eq:qaccompany}
\end{align}
maps the Kron reduced, internal currents to the reference nodes, and $\bS^{\rm red}$ is defined as the \textbf{reduced power injection vector}.

To study the pure network topology of the power grid, we introduce a topological connectivity matrix, comprised entirely of ones and zeros, which excludes self loops present in the adjacency matrix and normalizes all the line admittances off the principal diagonal.  Specifically, the \textbf{topological connectivity matrix} $\bT$ is defined in terms of the adjacency matrix $\bA$, elementwise via
\begin{align}
T_{i,j} &\triangleq 
\begin{cases}
1 & {\rm if} \hspace{2mm} A_{i,j} \neq 0 \hspace{2mm} {\rm and} \hspace{2mm} i\neq j\\
0 & {\rm if} \hspace{2mm} i=j
\end{cases}
\end{align}
For each node $b_i\in \bN$, we will define the \textbf{topological degree} 
\begin{align}
\deg(b_i)\triangleq \sum_{k=1}^n T_{i,k},
\end{align}
equal to the number of nodes $b_i$ is connected to within the network, excluding self loops and line parameters.  The graph density is defined by
\begin{align}
d = \frac{ 2 \vert \bE \vert}{\vert \bN \vert ( \vert \bN \vert -1 )} ,
\end{align}
such that $d\in [0,1]$ with smaller values describing sparsely connected graphs, and a value of $d=1$ corresponds to a complete graph where all nodes share an edge.

Using the topological degree of each node in the network, and the graph paths which define it, we will utilize simple graph searches to select the reference nodes for the iterative Kron reduction.  The graph topological reductions are inspired by the design structure of multiscale power grids, which embed subgraphs composed of generalized tree structures, strings of transmission nodes, and triangular configurations.  To show the equivalence of our degree one and degree two graph topological approach to the Kron reduction, we will extensively utilize several important properties of the Kron reduction, proven in D\"{o}rfler \& Bullo's work \cite{dorfler2013kron}.
\begin{theorem}\label{theorem:iteration}
Let the loopy Laplacian $\bQ$ define a network of $n$ nodes, with associated the adjacency matrix $\bA$.  Assume the network satisfies \cref{hyp:inductive} and \cref{hyp:lossless}, and let $\alpha =\{1,\cdots, m\}$,  $1< m<n$, be some index set.  Then the following hold: 
\begin{enumerate}[(i)]
\item The Kron reduced loopy Laplacian, $\bQ^{\rm red}$ as defined in equation \cref{eq:kronll}, exits for any such $\alpha$.
\item Let $\beta = \{1, \cdots , p\}$ be any index set such that $1< m < p < n$.  Then the Kron reduced loopy Laplacian with respect to $\beta$, denoted $\bQ^\beta \triangleq \bQ/ \bQ_{(\beta, \beta)}$, exists and
\begin{align}
\bQ/\bQ_{(\alpha, \alpha)} =\bQ^\beta / \bQ^\beta_{(\alpha, \alpha)}.
\end{align}
That is, the Kron reduction with reference nodes defined by $\alpha$ can be produced iteratively: (1) first applying the Kron reduction with respect to an arbitrary superset of reference nodes $\beta$; (2) secondly computing the Kron reduction of this reduced network, $\bQ^\beta$, with respect to the index set defined by $\alpha$.
\item If $\bA$ satisfies \cref{hyp:inductive}, then the Kron reduced adjacency matrix $\bA^{\rm red}$ satisfies \cref{hyp:inductive}.
\item The reduced power injections under the Kron reduction, defined in equations \cref{eq:currentbalancered} and \cref{eq:powerflowred} are also lossless, and net power is preserved, thus satisfying \cref{hyp:lossless}. 
\end{enumerate}
\end{theorem}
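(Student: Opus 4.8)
The plan is to split the four claims into two groups: parts (i)--(ii) are existence and compatibility statements about Schur complements, which I would settle with diagonal dominance and the classical quotient formula, while parts (iii)--(iv) are structure-preservation statements, which I would settle by factoring out the common imaginary unit and exploiting the resulting M-matrix and real-symmetric structure.

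\emph{Parts (i) and (ii).} For (i) it suffices to show the interior block $\bQ_{(\alpha,\alpha)}$ is nonsingular, since then the Schur complement in \cref{eq:kronll} is defined. I would argue exactly as in \cref{lemma:invertibleadjacency}: $\bQ_{(\alpha,\alpha)}$ is a principal submatrix of $\bQ$, so its diagonal entries are unchanged while each row's off-diagonal absolute sum can only decrease; hence it is diagonally dominant, and the dominance margin in row $i$ equals $|A_{ii}|$ plus the admittances joining $i$ to the reference set. Decomposing the interior into connected components, each component is irreducible and, because the whole graph is connected and $\alpha\neq\varnothing$, must be adjacent to $\alpha$ (or carry a self-loop), so each diagonal block is irreducibly diagonally dominant with at least one strictly dominant row, and Corollary 6.2.27 of Horn \& Johnson applies block-wise. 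For (ii) I would set up the nesting of index sets: reducing onto $\alpha$ eliminates $\{m+1,\dots,n\}$, reducing onto $\beta$ eliminates $\{p+1,\dots,n\}$, and $\bQ_{(\beta,\beta)}$ is a principal submatrix of $\bQ_{(\alpha,\alpha)}$. After checking the localization identity $\bQ^\beta_{(\alpha,\alpha)} = \bQ_{(\alpha,\alpha)}/\bQ_{(\beta,\beta)}$ (the Schur-complement entries carried on the retained interior indices $\{m+1,\dots,p\}$ do not depend on whether the other retained indices are present), the claim is precisely the Crabtree--Haynsworth quotient formula $\bQ/\bQ_{(\alpha,\alpha)} = (\bQ/\bQ_{(\beta,\beta)})/(\bQ_{(\alpha,\alpha)}/\bQ_{(\beta,\beta)})$, with all factors nonsingular by part (i) applied to $\alpha$ and to $\beta$.

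\emph{Part (iii).} Here I would first use \cref{hyp:inductive} to write $\bA = \mathfrak{i}\bB$ with $\bB$ real symmetric whose nonzero entries are negative; then $\bQ = \mathfrak{i}\bQ_{\mathbb R}$, and since the scalar $\mathfrak{i}$ factors cleanly through the Schur complement one gets $\bQ^{\rm red} = \mathfrak{i}\,(\bQ_{\mathbb R}/(\bQ_{\mathbb R})_{(\alpha,\alpha)})$. Now $-\bQ_{\mathbb R}$ has positive diagonal, nonpositive off-diagonal, and diagonal dominance, i.e.\ it is an M-matrix, and M-matrices are closed under Schur complementation (nonnegative inverse, preserved sign pattern); translating back shows each off-diagonal of $\bQ^{\rm red}$ is a nonnegative real multiple of $\mathfrak{i}$, so by \cref{eq:adjacencyelement} the off-diagonal entries of $\bA^{\rm red}$ are inductive. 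Connectivity of the reduced graph I would take from D\"orfler \& Bullo (the Schur complement of a connected loopy Laplacian is again connected). For the surviving self-loop I would use the row-sum identity $\bQ\mathbf{1} = \bg$ with $g_i = A_{ii}$; a short telescoping calculation gives $\bQ^{\rm red}\mathbf{1} = \bg_{[\alpha]} + \bQ^{\rm ac}\bg_{(\alpha)}$ with $\bQ^{\rm ac}$ as in \cref{eq:qaccompany}, and the M-matrix sign structure forces this vector to be a nonzero, entrywise-inductive multiple of $\mathfrak{i}$, yielding at least one nonzero $A^{\rm red}_{ii}$. I expect the two strictness points --- connectivity of the reduced graph and non-vanishing of the reduced shunt when all self-loops sit in the interior --- to be the main obstacle, requiring the irreducibility (hence strict positivity of the relevant M-matrix inverse) from part (i).

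\emph{Part (iv).} The losslessness is the robust core and I would isolate it first: since the Schur complement preserves symmetry and $\mathfrak{i}$ factors through, $\bQ^{\rm red} = \mathfrak{i}\,\bQ^{\rm red}_{\mathbb R}$ with $\bQ^{\rm red}_{\mathbb R}$ real symmetric, so $\sum_j S^{\rm red}_j = \bV_{[\alpha]}^\top\overline{\bQ^{\rm red}\bV_{[\alpha]}} = -\mathfrak{i}\,\bV_{[\alpha]}^\top\bQ^{\rm red}_{\mathbb R}\overline{\bV_{[\alpha]}}$ is purely imaginary, i.e.\ the reduced net active power vanishes. To compare with the original net power I would substitute the current-balance relation $\bC^{\rm red} = \bC_{[\alpha]} + \bQ^{\rm ac}\bC_{(\alpha)}$ from \cref{eq:currentbalancered} and eliminate the interior voltages through $\bC_{(\alpha)} = \bQ_{(\alpha,\alpha]}\bV_{[\alpha]} + \bQ_{(\alpha,\alpha)}\bV_{(\alpha)}$; this collapses the discrepancy to the single interior form $\sum_j S^{\rm red}_j - \sum_j S_j = -\bC_{(\alpha)}^\top\bQ_{(\alpha,\alpha)}^{-1}\overline{\bC_{(\alpha)}}$, which the lossless structure again renders purely imaginary. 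Invoking \cref{hyp:lossless} on the full network then gives that the reduced net active power is balanced, so the reduced injections satisfy \cref{hyp:lossless}; I would note that the full complex net power is preserved exactly when the eliminated interior nodes carry no injection, which is the classical setting in which the lossless hypothesis forces the remaining reactive terms to cancel.
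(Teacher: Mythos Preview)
Your proposal is correct and, at its technical core, takes the same route as the paper: the key maneuver for part~(iii) --- writing $\bA=\mathfrak{i}\bB$ with $\bB$ real and nonnegative, noting that the scalar $\mathfrak{i}$ passes through the Schur complement, and then appealing to the real nonnegative (equivalently, M-matrix/monotonicity) theory --- is exactly what the paper does. The differences are in packaging. For (i)--(ii) the paper simply cites the existence and quotient properties from D\"orfler \& Bullo, whereas you supply self-contained arguments via block-wise irreducible diagonal dominance and the Crabtree--Haynsworth quotient formula; this buys you independence from the reference at the cost of a little extra bookkeeping. For (iii) the paper likewise defers the closure of connectivity and the survival of a shunt to D\"orfler \& Bullo's closure/irreducibility results, while you sketch the row-sum identity $\bQ\mathbf{1}=\bg$ and the strict positivity of the interior M-matrix inverse to reach the same conclusion. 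For (iv) the paper argues only ``by construction'' (lossless lines stay lossless, and the accompanying matrix redistributes the interior currents so that net power is preserved); your explicit computation of $\sum_j S^{\rm red}_j-\sum_j S_j$ and your observation that, strictly speaking, only the active (real) component is guaranteed to balance in the general case --- with exact complex preservation recovered when the interior carries no injection --- is more careful than what the paper records.
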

\begin{proof}
Statement $(i)$ above is the \textbf{existence property} of lemma II.1, and statement $(ii)$ above is a simple corollary of the \textbf{quotient property} in lemma III.3 \cite{dorfler2013kron}.  Statement $(iii)$ can be understood from the \textbf{monotonicity property} in theorem III.6, the \textbf{closure property} in lemma II.1 and the \textbf{closure of irreducibility property} in theorem III.6 \cite{dorfler2013kron}.  

Under the assumption that all elements of $\bA$ are real and non-negative, D\"{o}rfler \& Bullo show that the elements of the Kron reduced adjacency matrix are monotonically increasing, such that
\begin{align}
 A^{\rm red}_{i,j} \geq  A_{[\alpha, \alpha]}[i,j].
\end{align} 
If the elements of $\bA$ are pure imaginary and non-positive, we will define $\widetilde{\bA} \triangleq i \bA$, which has real, non-negative entries.  From equation \cref{eq:loopyelement}, we find that the loopy Laplacian associated to $\widetilde{\bA}$ is given identically by $\widetilde{\bQ} \triangleq i \bQ$.  Therefore, the Kron reduced loopy Laplacian is given 
\begin{align}
\widetilde{\bQ}^{\rm red} &= \(i \bQ_{[\alpha, \alpha]}\)  - \(i\bQ_{[\alpha, \alpha)} \)\(i \bQ_{(\alpha,\alpha)}\)^{-1}\(i \bQ_{(\alpha, \alpha]}\),\\
&= i \bQ^{\rm red}.
\end{align}
In particular, from equation \cref{eq:adjacencyelement}, we find that the Kron reduced adjacency matrix defined by $\widetilde{\bQ}^{\rm red}$ is given identically by $\widetilde{\bA}^{\rm red} \triangleq i \bA^{\rm red}$.  By the monotonicity property in III.6, we know
\begin{align}
\widetilde{A}^{\rm red}_{i,j} &\geq  \widetilde{A}_{[\alpha, \alpha]}[i,j], & \Rightarrow & & \frac{1}{i} A^{\rm red}_{i,j} &\leq \frac{1}{i}A_{[\alpha,\alpha]}[i,j]
\end{align}
so that the elements of $\bA$ monotonically decrease along the imaginary axis under the Kron reduction, and therefore $\bA^{\rm red}$ is inductive.  

The closure property in lemma II.1 determines that at least one diagonal element of $\bA^{\rm red}$ is non-zero if and only if at least one diagonal element of $\bA$ is nonzero.  Moreover, $\bQ$ is irreducible if and only if $\bQ^{\rm red}$ is irreducible, such that the reduced network is connected.  Therefore, the Kron reduced network satisfies \cref{hyp:inductive}.   Finally statement $(iv)$ is true by construction.  The reduced network has lines given by combinations of lines in the full network, which are thus also lossless.  The Kron reduction is equivalent to Gaussian elimination of voltages of the interior nodes in equation \cref{eq:currentbalance}.  However, the usual power-flow equations \cref{eq:powerflow} do not apply to the reduced network, as they do not preserve the net power in the reduced network.  In particular, it is the accompanying matrix in equation \cref{eq:qaccompany} that makes the correction in the currents, mapping the reduced currents to the reference nodes, and preserves the net power in the vector of reduced injections \cref{eq:powerflowred}.
\end{proof}

\begin{figure}\label{fig:notations}
\small
\center
\begin{tabular}[h]{| l | l || l | l || l | l | }
\hline
$\mathbf{N}$ & node set & $\mathbf{E}$ & edge set & $\mathbf{H}$ & hash table \\
\hline
$\mathbf{d1}$ & degree one reduced & $\mathbf{d2}$ & degree two reduced & $\mathbf{tri}$ & triangle reduced \\
\hline
$\mathbf{vThr}$ & voltage threshold & $\mathbf{dThr}$ & degree threshold & $\mathbf{deg}(b)$ & degree of $b$ \\
\hline
\end{tabular}
\caption{Algorithm notations, e.g. the degree one reduced node and edge sets are denoted $\mathbf{d1N}$, $\mathbf{d1E}$ respectively. The degree one reduction data is stored in $\mathbf{d1H}$.  Degree and voltage thresholds are criteria set for greedy triangular reductions in \cref{alg:triangular}.} \end{figure}

To better understand how our reductions change the structure of the network, we will compare the degree distributions of the reduced networks.  A common method to measure the distance between two probability distributions with different support is the first Wasserstein distance, also known as the earth mover's distance \cite{rubner1998metric}.  When there is an underlying distance in the outcome space, the first Wasserstein distance is defined as the minimal work in the optimal transport problem, moving the ``weight'' of one distribution into the other.  In the discrete distribution case, the work is computed as the sum of all weighted distances between all pairs of bins, where each distance is weighted by how much probability is transfered from one bin to the other. Computationally, this distance is computed by linear optimization, and we use the code provided by Rubner et al. \cite{rubner1998code} to compute the distance between the degree distributions for our reduced networks.  In this case, the underlying distance is given by the L1 distance between the degrees in $\mathbb{N}$. 

\section{Reduction algorithms}
\label{sec:alg}
This section develops our algorithms for the sequential reduction of a multiscale, distribution and transmission, power grid network --- a reference for our notations can be found in \cref{fig:notations}.  Our methodology is designed to achieve the following objectives:
\begin{enumerate}
 \item the reduced network preserves qualitative features of the original grid's structure and dynamic behavior in the transient stability regime;
 \item the full network can be fully or partially reconstructed from the reduced model, at varying, intermediate resolutions;
 \item the reduced network is of a scale that is amenable to interactive visualization.
\end{enumerate}
We focus on off-line methods meeting the above objectives. As an off-line reduction, we use static network characteristics to produce a model which is robust to changes in the production, consumption, and dynamic characteristics of loads and generators. In particular, the same reduced network will be used to represent different load and parameter states including the inertia, damping and frequency control settings. Any such reduction implicitly assumes infrequent updates. These updates occur when topology and parameter estimations (line impedances, transformer settings, and related) undergo a significant change, e.g. in seasonal transitions or following major network modifications. With such schemes, operators infrequently need to reproduce reduced order models (and their associated parameterizations). However, the scheme can be improved with state estimation available through phasor measurement unit (PMU) technology in real time \cite{phadke2002synchronized}, i.e. on-line. The construction of a robust reduction scheme involving on-line state estimation will be the subject of future research. In the remainder of this section we discuss four consecutive sub-steps of our off-line scheme and analyze the performance of the algorithms on a real multiscale network.

\subsection{Degree zero reductions}

In our following algorithms, we will always assume that the network under consideration is formed by a graph with a single connected component, and therefore, has irreducible adjacency, Laplacian and loopy Laplacian matrices.  We will also assume that the vector of nodal voltages $\bV$ is also entirely non-zero.  With our test network this means that, as a preliminary step, we remove nodes of zero nominal voltage and restrict the test network to a single connected component, which will represent the full regional power grid.  In our test case, the connected sub-network contains 53,155 nodes, 63,832 edges, 268 PMU devices and 4,332 generators. The distribution of node degrees is given in \cref{fig:d1_d2_dist}. The mean node degree is 2.40 with standard deviation of 1.61, and a max degree of 40.  The graph density is approximately $4.518\times 10^{-5}$.  In our algorithms, the analysis is performed by manipulating the topological connectivity matrix $\bT$, but we describe the reduction algorithms at a high level with the node and edge sets, $\mathbf{N}$ and $\textbf{E}$.

\subsection{Degree one reductions}
\label{section:degreeone}
While the physical lines constituting the distribution sub-network form meshed, loopy graphs, the operational topology for load balancing consists of radial tree structures \cite{deka2017structure}. Operational switching disconnects lines and the meshed topology so that the substations, connected to the transmission network, form roots of disjoint trees in the distribution sub-networks. This structure differs significantly from the transmission network which typically has multiple loops energized at all times to guarantee continuous delivery to the substations \cite{deka2016estimating, deka2017structure}. In an operational window where the radial structure is unchanged, the distribution graph structure lends itself to an intuitive representation of the network's multiscale coupling. We map disjoint distribution trees to their respective roots at substations --- in our conceptual network, these terminal roots form super nodes which are used to represent the entire behavior of the tree.

\begin{figure}[ht]\label{fig:treecollapse}
\center
\includegraphics[width=.9\linewidth]{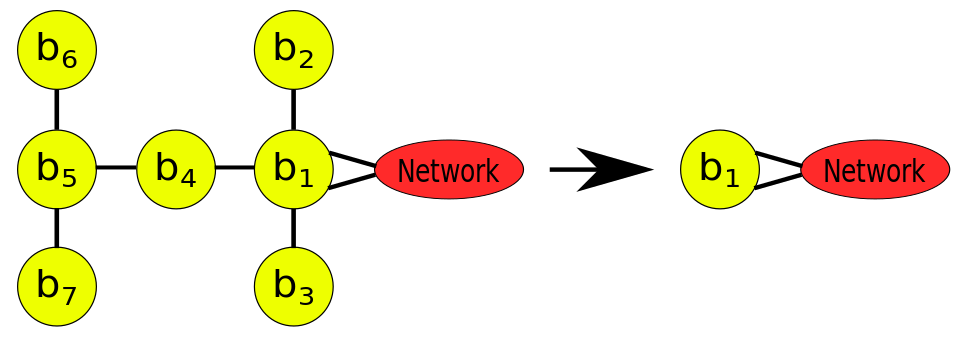}
\caption{Recursively collapsing nodes of degree one reduces the node set $\{b_2,b_3,b_4,b_5,b_6,b_7\}$ into the terminal super node $b_1$, where $\deg(b_1) \geq 2$.}
\end{figure}

Confining our analysis to the connected network $\mathbf{N}$, we collapse all trees in the network to their root nodes. This reduction is performed unambiguously by recursively mapping each node of degree one into the node with which it shares a line. The recursive step is performed until all nodes in $\mathbf{N}$ are of degree two or greater. Our method is described in \cref{alg:degree1}. To post process, and refine the graph structure, our design allows one to invert the collapse of any subset of a tree in the network; we use hashable maps for ease of implementation in this reconstruction. To each terminal super node, in which we cluster a tree, we associate a sequence of lists and arrays representing the recursive reduction procedure. This implementation is described comprehensively in \cref{sec:data}. We define the following notation.

\begin{definition}
The data structure $\mathbf{d1H}$ is a hashable map, $\{``field" :``data"\}$, where ``data'' is an ordered list. Any subset of nodes $\mathbf{t} \subset \mathbf{N}$ is defined as a \textbf{tree} if it is collapsed to a node under \cref{alg:degree1}. The mapping which collapses a tree, or a collection of trees, to the root node $b_1$ is associated to the field $t\_b_1$ in $\mathbf{d1H}$.
\end{definition}

\begin{algorithm}
\caption{Degree one reduction}
\label{alg:degree1}
\begin{algorithmic}
\STATE{Define: $\mathbf{d1N} \coloneqq \mathbf{N}$, $\mathbf{d1E} \coloneqq \mathbf{E}$, $\mathbf{d1H}\coloneqq$ empty hashable map.}
\WHILE{$\exists$ $b_1 \in \mathbf{d1N}$ with $\mathbf{deg}(b_1) < 2$,}
 \STATE{Remove $b_1$ from $\mathbf{d1N}$ and line $\{b_1,b_2\}$ from $\mathbf{d1E}$.}
 \IF{$t\_b_1 \in \mathbf{d1H}$,}
  \STATE{Append $b_2$ to each array stored in list $t\_b_1\in \mathbf{d1H}$.}
    \STATE{Append all arrays in list $t\_b_1\in \mathbf{d1H}$ to list $t\_b_2 \in \mathbf{d1H}$.}
      \STATE{Remove $t\_b_1$ from $\mathbf{d1H}$.}
    \ELSE
      \STATE{Write array $[b_1,b_2]$ to list $t\_b_2\in \mathbf{d1H}$.}
  \ENDIF
\ENDWHILE
\RETURN $\mathbf{d1N}, \mathbf{d1E}, \mathbf{d1H}$
\end{algorithmic}
\end{algorithm}

In each loop of \cref{alg:degree1} we collapse the degree one node, $b_1$, into the connected node $b_2$. The \textbf{if} statement requires that whenever a list of collapsed trees is associated to $t\_b_1 \in \mathbf{d1H}$, we append all associated arrays to the list $t\_b_2$, and $b_2$ is appended to each array denoting the root node. \cref{alg:degree1} reduces the test network to 32,891 nodes and 43,568 edges. The histogram of tree lengths and the distribution of the degrees of the nodes in $\mathbf{d1N}$ are given in the \cref{fig:d1_d2_dist}. The mean degree of nodes in $\mathbf{d1N}$ is 2.65, with a standard deviation 1.42 and maximal degree of 38. Tree lengths are calculated as the number of nodes aggregated into the super node, \emph{including} the root node itself. The total number of trees collapsed in $\mathbf{d1H}$ is 9,528 with a mean tree length 3.12 nodes, a standard deviation 2.41 and a max tree length of 36 nodes.  The graph density is approximately $8.054\times 10^{-5}$

The net power-flow, after mapping a tree to its root can be preserved in an intuitive way: the net power-flow of the entire tree can be parametrized through the terminal super node.  We will introduce the following lemma to demonstrate that this intuitive graph topological reduction is consistent with the procedure of the iterative Kron reduction.
\begin{lemma}\label{lemma:deg1step}
Let the loopy Laplacian $\bQ$ define an arbitrary, connected network of $n$ nodes satisfying \cref{hyp:inductive} and \cref{hyp:lossless}.  Without loss of generality, suppose $b_n$ is of degree one and is connected to node $b_{n-1}$.  Then under the Kron reduction with reference nodes defined by $\alpha \triangleq \{1,\cdots, n-1\}$, the entries of the reduced loopy Laplacian $\bQ^{\rm red}$, the reduced currents $\bC^{\rm red}$ and the reduced power injections $\bS^{\rm red}$ agree with the original $\bQ$, $\bC$ and $\bS$ in all entries except for those corresponding to the node $b_{n-1}$.  In particular, contracting the node $b_n$ into $b_{n-1}$ is realized by the Kron reduction with $\alpha $ as reference nodes.
\end{lemma}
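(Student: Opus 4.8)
The plan is to exploit the extreme sparsity that the degree-one structure imposes on the relevant blocks of the loopy Laplacian. First I would record the block decomposition of $\bQ$ with respect to $\alpha = \{1, \dots, n-1\}$ and its singleton complement $\{n\}$, so that $\bQ_{(\alpha,\alpha)}$ is the scalar $Q_{n,n}$. Since $b_n$ has topological degree one and connects only to $b_{n-1}$, equation \cref{eq:loopyelement} gives $Q_{i,n} = -A_{i,n} = 0$ for every $i \in \{1, \dots, n-2\}$, while $Q_{n-1,n} = -A_{n-1,n} \neq 0$. Hence the column $\bQ_{[\alpha,\alpha)}$ and, by symmetry, the row $\bQ_{(\alpha,\alpha]}$ each carry a single nonzero entry, located in the coordinate of $b_{n-1}$. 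I would also note that $Q_{n,n} = A_{n,n} + A_{n-1,n} \neq 0$, so the Schur complement in \cref{eq:kronll} is well defined; this is guaranteed abstractly by the existence property, statement $(i)$ of \cref{theorem:iteration}.

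With this sparsity in hand, the correction term $\bQ_{[\alpha,\alpha)}\bQ_{(\alpha,\alpha)}^{-1}\bQ_{(\alpha,\alpha]}$ is an outer product of two vectors each supported only at the $b_{n-1}$ coordinate, hence a rank-one matrix whose only nonzero entry is the $(n-1,n-1)$ diagonal entry, equal to $A_{n-1,n}^2 / Q_{n,n}$. Consequently $\bQ^{\rm red}$ coincides with $\bQ_{[\alpha,\alpha]}$ — the restriction of the original $\bQ$ to the reference indices — in every entry except the diagonal entry associated to $b_{n-1}$, which establishes the first assertion.

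For the currents I would invoke \cref{eq:currentbalancered}, writing $\bC^{\rm red} = \bC_{[\alpha]} + \bQ^{\rm ac}\bC_{(\alpha)}$ with $\bQ^{\rm ac} = -\bQ_{[\alpha,\alpha)}\bQ_{(\alpha,\alpha)}^{-1}$. By the same sparsity, $\bQ^{\rm ac}$ is a column vector whose only nonzero entry sits in the $b_{n-1}$ position, and $\bC_{(\alpha)}$ is the single scalar $C_n$. Thus $\bQ^{\rm ac}\bC_{(\alpha)}$ perturbs only the $b_{n-1}$ component, giving $C^{\rm red}_i = C_i$ for all $i \neq n-1$. Finally, since the Kron reduction leaves the reference voltages $\bV_{[\alpha]}$ unchanged and \cref{eq:powerflowred} is the elementwise product $\bV_{[\alpha]}\circ\thickbar{\bC}^{\rm red}$, the identity $S^{\rm red}_i = V_i\thickbar{C}_i = S_i$ holds coordinatewise for every $i \neq n-1$, completing the argument.

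This proof is essentially bookkeeping, so I do not expect a substantive obstacle; the only points demanding care are confirming that the self-loop $A_{n,n}$ enters solely through the scalar $Q_{n,n}$ (and hence creates no off-diagonal structure) and that $Q_{n,n}$ is invertible under \cref{hyp:inductive}. The conceptual content is simply the observation that, for a pendant node, the entire effect of the Kron elimination localizes at its unique neighbor $b_{n-1}$, which is exactly what makes the graph-topological degree-one collapse coincide with a single Kron step.
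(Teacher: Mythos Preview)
Your proposal is correct and follows essentially the same approach as the paper: both exploit that $\bQ_{[\alpha,\alpha)}$ is supported only in the $(n-1)$ coordinate so that the Schur-complement correction is a rank-one matrix touching only the $(n-1,n-1)$ entry, and both observe that $\bQ^{\rm ac}$ is likewise supported only at $b_{n-1}$. Your write-up is slightly more explicit about tracing through the current and power-injection identities via \cref{eq:currentbalancered} and \cref{eq:powerflowred}, whereas the paper dispatches those in one sentence, but the underlying computation is identical.
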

\begin{proof}
Let $\be_{n-1}\in\mathbb{R}^{n-1}$ be the vector comprised of zeros, except for the value $1$ in the entry $n-1$.  We decompose the loopy Laplacian, $\bQ$, such that it is given by
\begin{align}
\bQ \triangleq 
\begin{pmatrix}
\bQ_{[\alpha,\alpha]} & Q_{n-1,n}\be_{n-1} \\
Q_{n-1,n} \be_{n-1}^\T & Q_{n,n}
\end{pmatrix}.
\end{align}
From equation \cref{eq:kronll}, the Kron reduced loopy Laplacian is given by
\begin{align}
\bQ^{\rm red} &= \bQ_{[\alpha,\alpha]} - \frac{Q_{n-1,n}^2}{Q_{n,n}}
          \begin{pmatrix}
          0 & \cdots & 0 \\
          \vdots & \vdots & \vdots\\
          0 & \cdots & 1
          \end{pmatrix}
\end{align} 
which implies that the Kron reduction contracts $b_n$ into $b_{n-1}$.  Indeed, all elements of $\bQ_{[\alpha, \alpha]}$ remain unaffected, \emph{except} entry $Q_{n-1, n-1}$, which on the other hand is adjusted by the factor of $-\frac{Q_{n-1,n}^2}{Q_{n,n}}$ to find the power-flow equivalent nodal admittance at the root node $b_{n-1}$.  It is also easy to see that the only non-zero entry of $\bQ^{\rm ac}$ is in its entry $n-1$ by construction.
\end{proof}
\begin{cor}\label{cor:deg1kron}
Recursively collapsing degree one nodes, as in \cref{alg:degree1}, is compatible with the iterative Kron reduction, with reference nodes defined by $\done\bN$.  The reduced network, therefore, satisfies \cref{hyp:inductive} and \cref{hyp:lossless}.
\end{cor}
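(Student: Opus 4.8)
The plan is to iterate \cref{lemma:deg1step} node by node and then collapse the whole chain of single-node contractions into one Kron reduction by means of the quotient property in \cref{theorem:iteration}. \cref{lemma:deg1step} already shows that contracting a single degree-one node into its unique neighbor is exactly the Kron reduction with that neighbor among the reference set; the content of the corollary is that the \emph{recursive} procedure of \cref{alg:degree1}, which strips leaves one at a time until every surviving node has degree at least two, is globally equivalent to the single Kron reduction onto the terminal node set $\done\bN$, and that the result inherits \cref{hyp:inductive} and \cref{hyp:lossless}.

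First I would set up an induction on the order in which \cref{alg:degree1} deletes nodes. Let $c_1,\dots,c_k$ be the degree-one nodes in deletion order, write $\beta_0=\{1,\dots,n\}$ and $\beta_j=\beta_{j-1}\setminus\{c_j\}$, so $\beta_k=\done\bN$, and let $\bQ^{\beta_j}$ denote the loopy Laplacian obtained after the first $j$ contractions. The inductive hypothesis is that $\bQ^{\beta_j}$ satisfies \cref{hyp:inductive} and \cref{hyp:lossless} and that its topological graph is precisely the original graph with the leaves $c_1,\dots,c_j$ removed; the base case $j=0$ is the standing assumption on the full network.

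For the inductive step, since $c_{j+1}$ has degree one in $\bQ^{\beta_j}$ and both hypotheses hold there, \cref{lemma:deg1step} applies (after the harmless re-indexing permitted by the remark following \cref{eq:kronll}) and realizes the contraction of $c_{j+1}$ into its neighbor as the Kron step taking $\bQ^{\beta_j}$ to $\bQ^{\beta_{j+1}}$. The crucial feature is that this particular step creates \emph{no} fill-in: by the explicit formula in \cref{lemma:deg1step} it alters only the single diagonal entry of the neighbor and introduces no new edges, so the topological graph of $\bQ^{\beta_{j+1}}$ is exactly that of $\bQ^{\beta_j}$ with the leaf $c_{j+1}$ deleted — keeping the loopy-Laplacian degrees in lockstep with the bookkeeping that \cref{alg:degree1} performs on $\done\bN$. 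That $\bQ^{\beta_{j+1}}$ again satisfies \cref{hyp:inductive} and \cref{hyp:lossless} follows from parts $(iii)$ and $(iv)$ of \cref{theorem:iteration}, closing the induction.

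Finally I would assemble the chain. Applying the quotient property, part $(ii)$ of \cref{theorem:iteration}, to each adjacent pair $\beta_{j-1}\supset\beta_j$ and inducting over $j$ yields $\bQ^{\beta_j}=\bQ/\bQ_{(\beta_j,\beta_j)}$ for every $j$; at $j=k$ this reads $\bQ^{\beta_k}=\bQ/\bQ_{(\done\bN,\done\bN)}$, so the recursive reduction is the single Kron reduction of the original network onto $\done\bN$. The last sentence of the corollary is then immediate, since parts $(iii)$ and $(iv)$ of \cref{theorem:iteration} certify that this reduced network satisfies \cref{hyp:inductive} and \cref{hyp:lossless}. The main obstacle is the bookkeeping in the inductive step: one must verify that the loopy-Laplacian degree structure stays synchronized with the purely topological leaf-deletions of \cref{alg:degree1}, which is exactly the no-fill-in observation above and which also ensures $\done\bN$ is well defined independently of the deletion order (it is the $2$-core of the graph), so that the terminal reference set is unambiguous.
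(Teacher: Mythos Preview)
Your proposal is correct and follows essentially the same approach as the paper: invoke \cref{lemma:deg1step} for each single leaf contraction and then use the quotient/iteration property of \cref{theorem:iteration} to identify the chain of one-node Kron steps with the single Kron reduction onto $\done\bN$, with \cref{hyp:inductive} and \cref{hyp:lossless} inherited via parts $(iii)$ and $(iv)$. The paper's proof is just a two-sentence citation of these two results; your added bookkeeping about no fill-in and the $2$-core is a welcome elaboration but not a different route.
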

\begin{proof}
\cref{lemma:deg1step} guarantees that collapsing a single node, as in \cref{alg:degree1} is compatible with the Kron reduction.  \cref{theorem:iteration} shows that this procedure can be performed recursively, and equivalently, to the Kron reduction produced with $\done\bN$ as reference nodes.
\end{proof}

As a consequence of \cref{cor:deg1kron}, the reduced current balance and power-flow equations for the network defined by $\done\bN, \done\bE$ can be computed via equation \cref{eq:currentbalance} and \cref{eq:powerflow}, with respect to a single iteration of the Kron reduction with the nodes $\done\bN$ chosen as a reference.  However, by storing the sequential mappings in $\done\bH$, various levels of resolution can be introduced to a reduced order model by: (i) selecting the nodes in $\done\bN$ as reference nodes, and (ii) additionally selecting trees or sub-trees as reference nodes for a Kron reduction of $\bN$.  Reducing radial networks was performed in a similar fashion by \cite{huo2018power} as a preprocessing step to their $k-$nearest neighbors clustering approach.  We add to this discussion now with the rigorous proof of the compatibility of this topological reduction with the iterative Kron reduction. 

The reductions to the test network via \cref{alg:degree1} are significant, however, in an on-line reduction we may expect a further collapse yet. In our study, the degrees of nodes in the distribution network are defined by the physical lines connecting nodes, irrespective of the operational disconnecting. In practice, however, the operational switching for real power delivery further sparsifies the network and forms additional tree structures that would be collapsed under \cref{alg:degree1}. The operational structure typically changes in response to system faults and outages which may occur a few times a day \cite{deka2017structure}. Therefore, on-line graph reduction faces the additional challenge of efficiently learning the operational topology, based on incomplete information, and constructing a reduced model within the window of the current configuration.

\subsection{Degree two reductions}
Geographically distant sub-networks that have significant generation or load resources are linked for robustness of power delivery. In case of line failures within one area, the interconnected sub-networks can be configured to balance loads and generation around the failure. In a setting where the intermediate area between these sub-networks has low generation or load, the connection between them is comprised of long range transmission lines, as seen in grids in the USA, China and others \cite{hvdc}. These transmission lines are topologically modeled as string-like line sub-graphs of degree two nodes. Often, the intermediate nodes in the string lack significant generation or load and have negligible impact to network dynamics. The simple dynamical transmission structure of these degree two nodes motivates an intuitive model of the power-flow: we replace all nodes in the interior of the string with a ``meta-edge'' and parameterize the net power-flow with line characteristics. \cref{fig:edgemap} visualizes this reduction in which we approximate the string of degree two nodes, $b_1$ through $b_5$, with a single line connecting $b_1$ and $b_5$.

\begin{figure}[ht]\label{fig:edgemap}
\center
\includegraphics[width=.5\linewidth]{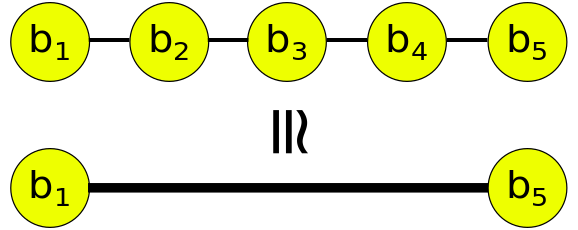}
\caption{Strings of degree two nodes are mapped to a ``meta-edge''.}
\end{figure}

In the following lemma, we demonstrate the simple case of a degree two topological reduction, mapping a degree two node to an edge, is consistent with the iterative Kron reduction.
\begin{lemma}\label{lemma:simpledegree2}
Let the loopy Laplacian $\bQ$ define an arbitrary, connected network of $n$ nodes satisfying \cref{hyp:inductive} and \cref{hyp:lossless}.  Without loss of generality, suppose $b_n$ is of degree two and is connected to nodes $b_{n-2}$ and $b_{n-1}$.  Then under the Kron reduction with reference nodes defined by $\alpha \triangleq \{1,\cdots, n-1\}$, the entries of the reduced loopy Laplacian $\bQ^{\rm red}$, the reduced currents $\bC^{\rm red}$ and the reduced power injections $\bS^{\rm red}$ agree with the original $\bQ$, $\bC$ and $\bS$ in all entries except for those corresponding to nodes $b_{n-1}$ and $b_{n-2}$.  In particular, mapping the node $b_n$ to the edge $\{b_{n-1},b_{n-2}\}$ is realized by the Kron reduction with reference nodes defined by $\alpha$.
\end{lemma}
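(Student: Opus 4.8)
The plan is to mirror the structure of the proof of \cref{lemma:deg1step}, replacing the single off-diagonal coupling of a degree one node with the two couplings of a degree two node. First I would decompose $\bQ$ with respect to $\alpha=\{1,\dots,n-1\}$ so that the final row and column, corresponding to $b_n$, carry only the two nonzero off-diagonal entries $Q_{n-2,n}$ and $Q_{n-1,n}$. Writing the coupling vector as $\bq \triangleq Q_{n-2,n}\be_{n-2} + Q_{n-1,n}\be_{n-1} \in \mathbb{C}^{n-1}$ and using the symmetry of $\bQ$, the loopy Laplacian takes the block form
\begin{align}
\bQ = \begin{pmatrix} \bQ_{[\alpha,\alpha]} & \bq \\ \bq^\T & Q_{n,n} \end{pmatrix}.
\end{align}

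Next I would apply \cref{eq:kronll} directly. Since the interior block $\bQ_{(\alpha,\alpha)}=Q_{n,n}$ is the scalar self-admittance at $b_n$, the Schur complement reduces to the rank-one update
\begin{align}
\bQ^{\rm red} = \bQ_{[\alpha,\alpha]} - \frac{1}{Q_{n,n}}\,\bq\bq^\T.
\end{align}
The correction $\bq\bq^\T$ is supported entirely on the $2\times 2$ principal submatrix indexed by $\{n-2,n-1\}$, carrying $Q_{n-2,n}^2$ and $Q_{n-1,n}^2$ on the diagonal and $Q_{n-2,n}Q_{n-1,n}$ off it. Hence $\bQ^{\rm red}$ agrees with $\bQ_{[\alpha,\alpha]}$ in every entry except the four indexed by $\{n-2,n-1\}$, which establishes the claim for $\bQ^{\rm red}$.

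I would then read off the network interpretation via \cref{eq:adjacencyelement}. The off-diagonal correction to $Q_{n-2,n-1}$ introduces (or, if $b_{n-2}$ and $b_{n-1}$ were already adjacent, augments) the line admittance between them --- this is precisely the meta-edge of \cref{fig:edgemap} formed by the series combination of the two lines through $b_n$ --- while the two diagonal corrections adjust the self-admittances at $b_{n-1}$ and $b_{n-2}$. For the currents, I would note that the accompanying matrix of \cref{eq:qaccompany} specializes here to $\bQ^{\rm ac} = -\bq/Q_{n,n}$, a vector supported only on positions $n-2$ and $n-1$; substituting into \cref{eq:currentbalancered} shows that $\bC^{\rm red}$, and therefore $\bS^{\rm red}$, coincides with $\bC$ and $\bS$ outside those two entries, into which the current and injection at $b_n$ are redistributed.

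The routine part is the Schur complement computation, a direct scalar-pivot specialization of \cref{eq:kronll}. The one point requiring care is the interpretive step: unlike the degree one case of \cref{lemma:deg1step}, where only a diagonal self-admittance changed, here the reduction also modifies an off-diagonal entry, so I must argue through \cref{eq:adjacencyelement} that this entry is a genuine line admittance and, by \cref{theorem:iteration}(iii), remains inductive. I would also dispatch both configurations uniformly --- whether or not $\{b_{n-2},b_{n-1}\}$ was already an edge of the original network --- since the same formula simply adds the series admittance to whatever admittance was present.
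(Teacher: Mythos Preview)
Your proposal is correct and follows essentially the same approach as the paper: decompose $\bQ$ so the last row and column are carried by the coupling vector $Q_{n-2,n}\be_{n-2}+Q_{n-1,n}\be_{n-1}$, compute the scalar-pivot Schur complement, observe that the rank-one correction is supported on the $2\times 2$ block indexed by $\{n-2,n-1\}$, and note that $\bQ^{\rm ac}$ is likewise supported there. Your write-up is in fact slightly cleaner than the paper's (which contains a few index typos in the displayed correction matrix and self-admittance updates), and your closing remarks about handling both the adjacent and non-adjacent cases for $\{b_{n-2},b_{n-1}\}$ uniformly are a useful addition.
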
 
\begin{proof}
Let $\be_{n-2},\be_{n-1}\in \mathbb{R}^{n-1}$ be the vectors composed of zeros except for a one in position $n-2$ and $n-1$ respectively.  We decompose the loopy Laplacian $\bQ$ as,
\begin{align}
\bQ = 
\begin{pmatrix}
\bQ_{[\alpha, \alpha]} & Q_{n-2,n}\be_{n-2} + Q_{n-1,n}\be_{n-1} \\
Q_{n-2,n}\be_{n-2}^\T +Q_{n-1,n}\be_{n-1}^\T & Q_{n,n} 
\end{pmatrix}
\end{align}
The Kron reduced loopy Laplacian is given by 
\begin{align}
 \bQ^{\rm red} = \bQ_{[\alpha,\alpha]} - \frac{1}{Q_{n,n}}
          \begin{pmatrix}
          0 & \cdots & 0 & 0& 0 \\
          \vdots & \ddots &\vdots & \vdots & \vdots\\
          0 & \cdots & 0& 0 & 0\\
          0 & \cdots &0 &  Q_{n-2,n}^2 & Q_{n-2,n} Q_{n-1,n} \\
          0 & \cdots & 0 & Q_{n-2,n} Q_{n-1,n} & Q_{n-2,n}^2 
          \end{pmatrix},
\end{align} 
such that the admittance of the line $\{b_{n-2},b_{n-1}\}$ is updated in the Kron reduced network, where 
\begin{align}
Q^{\rm red}_{n-2,n-1} = Q^{\rm red}_{n-1,n-2} = Q_{n-2,n-1} -\frac{Q_{n-2,n} Q_{n-1,n}}{ Q_{n,n}}.
\end{align}
Similarly, the nodal self admittances of $b_{n-2},b_{n-1}$ are updated such that
\begin{align}
Q^{\rm red}_{n-2,n-2} &= Q_{n-2,n-2} - \frac{ Q_{n-1,n}^2}{ Q_{n,n}}\\
Q^{\rm red}_{n-2,n-2} &= Q_{n-1,n-1} -\frac{ Q_{n-2,n}^2 }{ Q_{n,n}},
\end{align}
while leaving all other nodes unaffected.  It is also easy to see that the only non-zero entries of $\bQ^{\rm ac}$ are in entries $n-2$ and $ n-1$ by construction.
\end{proof}

Recursively removing degree one nodes from $\mathbf{N}$ as described \cref{section:degreeone} produces the network $\mathbf{d1N}$, $\mathbf{d1E}$ comprised of nodes degree two or greater. Our subsequent topological reduction proceeds to remove all nodes of degree two by recursively replacing degree two nodes with edges, \emph{if the edge does not already appear in $\mathbf{d1E}$}.  However, our topology-based procedure of: (i) mapping the node $b_n$ to the edge $\{b_{n-2},b_{n-1}\}$, (ii) prohibiting multiple edges between nodes, has the additional effect of reducing other tree-like configurations. These structures are discovered when the procedure results in a degree one node in $\mathbf{d1N}$.  

\begin{definition}\label{def:sparsetriangle}
Let $\{b_1, b_2, b_3\} \subset \mathbf{d1N}$ and $\{b_1,b_2\},\{b_1,b_3\},\{b_2,b_3\}\in \mathbf{d1E}$ such that $\mathbf{deg}(b_1)=\mathbf{deg}(b_2) =2$, while $\mathbf{deg}(b_3)\geq 3$. The set $\{b_1, b_2, b_3\} \subset \mathbf{d1N}$ is defined to be a \textbf{sparsely connected triangle}.
\end{definition}

Let $\{b_1, b_2, b_3\}$ be a sparsely connected triangle as in the left hand side of \cref{fig:gentree}. Removing $b_1$, and the edges $\{b_1,b_2\}$ and $\{b_1,b_3\}$, lowers the degree of $b_2$ to one. In \cref{lemma:gentree} we demonstrate that a degree one node is produced by our reduction procedure if and only if a sparsely connected triangle is reduced via removing a degree two node contained in the triangle.

\begin{figure}[ht]\label{fig:gentree}
\center
\includegraphics[width=.7\linewidth]{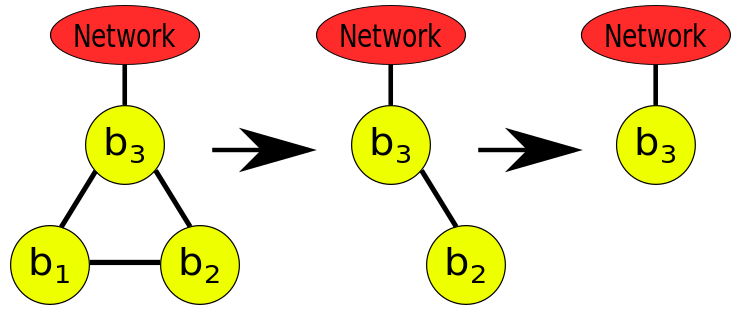}
\caption{\cref{alg:nodetoedge} removes the node $b_1$ from the sparsely connected triangle. \cref{alg:collapselasso} maps the nodes $b_1$ and $b_2$ to the root node $b_3$.  Note that the sparsely connected triangle ${b_1,b_2, b_3}$ will further collapsed into the network by \cref{alg:collapselasso} if $\deg(b_3)=3$ at the beginning of the reduction.}
\end{figure}

\begin{lemma}\label{lemma:gentree}
Let $b_1\in \mathbf{d1N}$ be a node of degree two with edges $\{b_1,b_2\}$ and $\{b_1,b_3\}$. A degree one node is produced by replacing $b_1$ with the edge $\{b_2,b_3\}$, prohibiting double lines, if and only if $\{b_1,b_2,b_3\}$ is a sparsely connected triangle.
\end{lemma}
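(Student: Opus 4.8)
The plan is to analyze the single reduction step---removing the degree-two node $b_1$ and replacing it by the edge $\{b_2,b_3\}$, subject to the prohibition on double lines---and to track exactly how the degrees of the only two affected nodes, $b_2$ and $b_3$, change. First I would record that since $b_1$ has degree two with edges $\{b_1,b_2\}$ and $\{b_1,b_3\}$, and since $\mathbf{d1E}$ contains no self-loops or repeated lines, the neighbors $b_2,b_3$ are distinct from $b_1$ and from each other. Deleting $b_1$ alters the degree of no node other than $b_2$ and $b_3$, so a degree-one node can only be created at one of these two vertices.

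The argument splits on whether the edge $\{b_2,b_3\}$ is already present in $\mathbf{d1E}$. If it is \emph{absent}, the procedure deletes $\{b_1,b_2\}$ and $\{b_1,b_3\}$ and inserts $\{b_2,b_3\}$; each of $b_2$ and $b_3$ then loses its incidence to $b_1$ but gains the new incidence to the other, so $\mathbf{deg}(b_2)$ and $\mathbf{deg}(b_3)$ are unchanged. Since every node of $\mathbf{d1N}$ has degree at least two, no degree-one node appears in this case. If instead the edge $\{b_2,b_3\}$ is \emph{present}, then prohibiting double lines means only the two edges incident to $b_1$ are deleted and no edge is added; hence $\mathbf{deg}(b_2)$ and $\mathbf{deg}(b_3)$ each decrease by exactly one. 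A degree-one node is therefore produced if and only if at least one of $b_2,b_3$ had degree exactly two.

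With this dichotomy in hand, both implications follow. For the forward direction, if $\{b_1,b_2,b_3\}$ is a sparsely connected triangle in the sense of \cref{def:sparsetriangle}, then all three edges lie in $\mathbf{d1E}$ and $\mathbf{deg}(b_2)=2$; we are in the present case, and removing $b_1$ drops $\mathbf{deg}(b_2)$ from two to one, producing a degree-one node. For the converse, suppose removing $b_1$ produces a degree-one node. By the dichotomy we must be in the present case, so all three edges $\{b_1,b_2\}$, $\{b_1,b_3\}$, $\{b_2,b_3\}$ lie in $\mathbf{d1E}$ and, after relabeling $b_2\leftrightarrow b_3$ if necessary, $\mathbf{deg}(b_2)=2$. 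It then remains only to verify $\mathbf{deg}(b_3)\ge 3$, which together with $\mathbf{deg}(b_1)=\mathbf{deg}(b_2)=2$ matches the definition of a sparsely connected triangle exactly.

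The one step requiring genuine care---and the main obstacle---is ruling out $\mathbf{deg}(b_3)=2$ in the converse. If both $b_2$ and $b_3$ had degree two, then each would be adjacent only to $b_1$ and to the other, so $\{b_1,b_2,b_3\}$ would constitute an entire connected component of the graph. This contradicts the standing assumption that $\mathbf{d1N}$ is a single connected component of order much larger than three, forcing $\mathbf{deg}(b_3)\ge 3$. I would make this connectivity argument explicit, since it is precisely what pins down the asymmetric degree condition $\mathbf{deg}(b_3)\ge 3$ in \cref{def:sparsetriangle} and completes the equivalence.
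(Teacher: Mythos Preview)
Your proposal is correct and follows essentially the same approach as the paper: both arguments split on whether $\{b_2,b_3\}$ already lies in $\mathbf{d1E}$, observe that degrees of $b_2,b_3$ are unchanged if it does not and drop by one if it does, and then invoke connectivity of $\mathbf{d1N}$ to force $\mathbf{deg}(b_3)\ge 3$. Your treatment is in fact a bit more careful than the paper's, since you make explicit that the connectivity step requires $\mathbf{d1N}$ to have more than three nodes (otherwise the triangle could be the whole graph); the paper leaves this implicit.
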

\begin{proof}
By construction, every node in $\mathbf{d1N}$ is of degree two or greater. The nodes $b_2$ and $b_3$ each lose an edge in the reduction, $\{b_1,b_2\}$ and $\{b_1,b_3\}$ respectively. However, if $\{b_2,b_3\}$ is not an edge we will add this edge to the network. Therefore $b_2$ and $b_3$
remain the same degree if and only if $\{b_2,b_3\} \notin \mathbf{d1E}$. Suppose removal of the edges $\{b_1,b_2\}$ and $\{b_1,b_3\}$ has produced a degree one
node. We conclude $\mathbf{deg}(b_2)=2$ or $\mathbf{deg}(b_3)=2$, and $\{b_2,b_3\}\in \mathbf{d1E}$. Without loss of generality suppose $\mathbf{deg}(b_2)=2$. The edges connecting $b_2$ are therefore $\{b_1,b_2\}$ and $\{b_2,b_3\}$. The network $\mathbf{d1N}$ has a single connected component so we conclude that $\mathbf{deg}(b_3)\geq 3$. Indeed, this node must connect the triangular to the rest of the network. The converse statement is obvious from the above discussion and \cref{fig:gentree}. By subsequently performing a recursive collapse of degree one nodes, we may redefine $\mathbf{d1N}$ to consist of nodes at least degree two.
\end{proof}

Many structures reduce to a sparsely connected triangle by recursively replacing nodes with edges.  The configurations which reduce to sparsely connected triangles includes but is not limited to 
\begin{enumerate}[(i)]
  \item any simple polygon of nodes $\mathbf{P} \subset \mathbf{d1N}$ for which every node but one in $\mathbf{P}$ is of degree two;
  \item many triangular meshes, which as in \cref{fig:meshy_triangles}, are connected to the outside network through a single node;
  \item various combinations of the above.  
\end{enumerate}  
\begin{figure}[ht]\label{fig:meshy_triangles}
\center
\includegraphics[width=.3\linewidth]{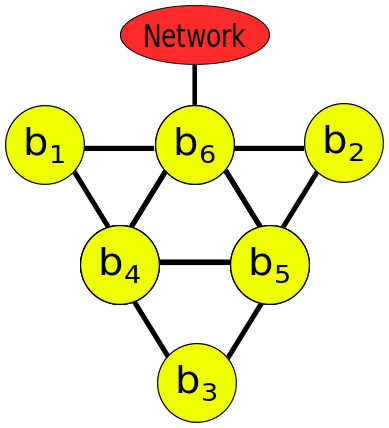}
\caption{Replacing nodes $b_1,b_2$ and $b_3$ with edges, while prohibiting multiple edges reduces the triangular mesh to a sparsely connected triangle.}
\end{figure}

An exhaustive characterization of the possible configurations which reduce to sparsely connected triangles is difficult to define, and goes beyond the scope of this work.  However, the above examples are useful for intuitively demonstrating the types of sub-networks which are collapsed to sparsely connected triangles.  Specifically, these are formed by loopy meshes, which might be densely connected internally, but are connected to other sub-networks of nodes \emph{through a single terminal bus}.  These formations are typical of distribution structures, and coherent sets of nodes that are weakly coupled to the rest of the network, and this justifies modeling these configurations as generalized trees.  

Actually, any configuration of nodes that can be reduced to a sparsely connected triangle is collapsed entirely to a terminal node.  Specifically, the sparsely connected triangle is broken by our routine as in \cref{lemma:gentree}, and the remaining nodes are mapped into a terminal root node by recursive degree one reduction.  In the following lemma we demonstrate that mapping the sparsely connected triangle, described in \cref{def:sparsetriangle}, into the root node $b_3$ is consistent with the iterative Kron reduction.

\begin{lemma}\label{lemma:sparsetrikron}
Let the loopy Laplacian $\bQ$ define an arbitrary, connected network of $n$ nodes satisfying \cref{hyp:inductive} and \cref{hyp:inductive}.  Without loss of generality, suppose that the set of nodes $\{b_{n-2},b_{n-1}, b_n\}$ forms a sparsely connected triangle.  Then under the Kron reduction with reference nodes defined by $\alpha \triangleq \{1,\cdots, n-2\}$, the entries of the reduced loopy Laplacian $\bQ^{\rm red}$, the reduced currents $\bC^{\rm red}$ and the reduced power injections $\bS^{\rm red}$ agree with the original $\bQ$, $\bC$ and $\bS$ in all entries except for those corresponding to node $b_{n-2}$.  In particular, mapping the nodes $\{b_{n-2},b_{n-1},b_n\}$ into the root node $b_{n-2}$ is realized by the Kron reduction  with reference nodes defined by $\alpha$.
\end{lemma}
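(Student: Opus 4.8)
The plan is to compute the Schur complement in \cref{eq:kronll} directly, exactly as in the proofs of \cref{lemma:deg1step} and \cref{lemma:simpledegree2}, but now eliminating the two degree-two nodes $b_{n-1}$ and $b_n$ simultaneously with reference set $\alpha = \{1,\cdots,n-2\}$. The decisive structural observation is that, by \cref{def:sparsetriangle}, each of $b_{n-1}$ and $b_n$ has degree two, with both of its edges lying inside the triangle: one edge to the other degree-two node and one edge to the root $b_{n-2}$. Consequently neither $b_{n-1}$ nor $b_n$ is adjacent to any reference node other than $b_{n-2}$, so the entire coupling block $\bQ_{[\alpha,\alpha)}$, of size $(n-2)\times 2$, is supported in its row $n-2$.

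First I would record this factorization explicitly. Letting $\be_{n-2}\in\mathbb{R}^{n-2}$ be the standard basis vector and $\bv \triangleq (Q_{n-2,n-1}, Q_{n-2,n})^\T$, the observation above gives $\bQ_{[\alpha,\alpha)} = \be_{n-2}\bv^\T$ and, by symmetry of $\bQ$, $\bQ_{(\alpha,\alpha]} = \bv\,\be_{n-2}^\T$. The $2\times 2$ interior block $\bQ_{(\alpha,\alpha)}$ is nonsingular by the existence statement of \cref{theorem:iteration}, so its inverse is available. Substituting into \cref{eq:kronll} collapses the correction term into
\begin{align}
\bQ_{[\alpha,\alpha)}\,\bQ_{(\alpha,\alpha)}^{-1}\,\bQ_{(\alpha,\alpha]} = \left(\bv^\T \bQ_{(\alpha,\alpha)}^{-1}\bv\right)\be_{n-2}\be_{n-2}^\T,
\end{align}
a scalar multiple of $\be_{n-2}\be_{n-2}^\T$. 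This matrix vanishes except in its $(n-2,n-2)$ entry, which immediately yields that $\bQ^{\rm red}$ agrees with $\bQ_{[\alpha,\alpha]}$ everywhere except at the self-admittance of the root node $b_{n-2}$.

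The current and power statements then follow from the same rank-one structure. From \cref{eq:qaccompany}, the accompanying matrix is $\bQ^{\rm ac} = -\be_{n-2}\bv^\T\bQ_{(\alpha,\alpha)}^{-1}$, whose only nonzero row is row $n-2$; hence the correction $\bQ^{\rm ac}\bC_{(\alpha)}$ in \cref{eq:currentbalancered} perturbs only the $(n-2)$ entry of the reduced current, so $\bC^{\rm red}$ matches $\bC$ off the index $n-2$. Since the reference voltages $\bV_{[\alpha]}$ are unchanged, the Hadamard product in \cref{eq:powerflowred} shows that $\bS^{\rm red}$ likewise differs from $\bS$ only at $n-2$.

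I expect the only genuine obstacle to be the first step, pinning down the support of the coupling block. This is purely adjacency bookkeeping, but it requires carefully invoking the degree-two constraints on $b_{n-1}$ and $b_n$ from \cref{def:sparsetriangle} to be certain that no hidden edge to a reference node other than $b_{n-2}$ exists. Once the factorization $\bQ_{[\alpha,\alpha)} = \be_{n-2}\bv^\T$ is secured, the remaining computations are the routine $2\times 2$ analogue of \cref{lemma:simpledegree2} and carry no surprises.
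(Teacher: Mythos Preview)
Your proposal is correct and follows essentially the same approach as the paper: both compute the Schur complement directly, using the degree-two constraints on $b_{n-1}$ and $b_n$ to conclude that the coupling block $\bQ_{[\alpha,\alpha)}$ is supported only in row $n-2$, whence the correction term affects only the $(n-2,n-2)$ entry and $\bQ^{\rm ac}$ has nonzero entries only in row $n-2$. The paper additionally writes out the $2\times 2$ inverse and the resulting scalar correction explicitly, whereas your rank-one factorization $\bQ_{[\alpha,\alpha)}=\be_{n-2}\bv^\T$ packages the same computation more compactly.
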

\begin{proof}
Let $\be_{n-2}\in \mathbb{R}^{n-2}$ be the vector composed of zeros except for a one in position $n-2$.  We decompose the loopy Laplacian $\bQ$ as,
\begin{align}
\bQ = 
\begin{pmatrix}
\bQ_{[\alpha, \alpha]} & Q_{n-2,n-1}\be_{n-2}  &Q_{n-2,n}\be_{n-2}   \\
Q_{n-2,n-1}\be_{n-2}^\T & Q_{n-1,n-1} & Q_{n-1,n}\\
Q_{n-2,n}\be_{n-2}^\T  & Q_{n-1,n} &  Q_{n,n}
\end{pmatrix}.
\end{align}
Thus,
\begin{align}
\bQ_{(\alpha, \alpha)}^{-1}
=\frac{1}{Q_{n-1,n-1}Q_{n,n} -Q_{n-1,n}^2} 
\begin{pmatrix}
 Q_{n,n} & -Q_{n-1,n} \\
 -Q_{n-1,n}& Q_{n-1,n-1}
\end{pmatrix}
\end{align}
which implies
\begin{align}
\bQ_{[\alpha,\alpha)} \bQ_{(\alpha,\alpha)}^{-1} \bQ_{(\alpha,\alpha]} &=
  \begin{pmatrix}
  \0_{n-2 \times n-2} &  \0_{n-2\times 1} \\
  \0_{1 \times n-2} & q
  \end{pmatrix},
\end{align} 
where $q$ is a scalar, computed directly as
\begin{align}\begin{split}
q =& 
\frac{Q_{n-2,n-1}\(Q_{n-2,n-1} Q_{n,n} -Q_{n-1,n}Q_{n-2,n}\)}{Q_{n-1,n-1}Q_{n,n} -Q_{n-1,n}^2}  \\
&+\frac{Q_{n-2,n}\(Q_{n-1,n} Q_{n-1,n-1} - Q_{n-2,n-1} Q_{n-1,n}\)}{Q_{n-1,n-1}Q_{n,n} -Q_{n-1,n}^2}.
\end{split}\label{eq:qterm}
\end{align}

The Kron reduced loopy Laplacian is given by
\begin{align}
\bQ^{\rm red} &=
 \bQ_{[\alpha,\alpha]} - \begin{pmatrix}
  \0_{n-2 \times n-2} &  \0_{n-2\times 1} \\
  \0_{1 \times n-2} & q
  \end{pmatrix},
\end{align}
such that the sparsely connected triangle is collapsed into the node $b_{n-2}$ while leaving all other nodes unaffected.  The self admittance for $b_{n-2}$ is updated in the reduced model via the term $-q$ defined in equation \cref{eq:qterm}.  Finally, it is easy to verify that the only non-zero entries of $\bQ^{\rm ac}\in\mathbb{C}^{n-2 \times 2}$ are those in row $n-2$.
\end{proof}

Our analysis of the basic degree two reduction, and the reduction of sparsely connected triangles, leads to \cref{alg:nodetoedge} and \cref{alg:collapselasso}.  We describe the data structures used in these routines in \cref{sec:data} and define the following notation.
\begin{definition}
The data structure $\mathbf{d2H}$ is a hashable map $\{``field" :``data"\}$, where ``data'' is an ordered list. The mapping which takes the node $b_1$ to the edge $\{b_2, b_3\}$ is associated to the field $e\_b_2\_b_3$, where we assume $b_2 < b_3$. We define any subset $\mathbf{gt} \subset \mathbf{d1N}$ to be a \textbf{generalized tree} if it is collapsed to a root node under \cref{alg:nodetoedge} and \cref{alg:collapselasso}. The mapping which collapses a generalized tree to the terminal node $b_1$ is associated to the field $t\_b_1\in \mathbf{d2H}$.
\end{definition}

\begin{algorithm}
\caption{Degree two reduction}
\label{alg:nodetoedge}
\begin{algorithmic}
\STATE{Define: $\mathbf{d2N} \coloneqq\mathbf{d1N}$, $\mathbf{d2E} \coloneqq\mathbf{d1E}$ and $\mathbf{d2H}\coloneqq\mathbf{d1H}$.}
\WHILE{$\exists$ $b_1 \in \mathbf{d2N}$ with $\mathbf{deg}(b_1) < 3$,}
 \STATE{Let $b_1$ be connected to $b_2$ and $b_3$ S.T. $b_2 < b_3$.}
 \STATE{Remove $b_1$ from $\mathbf{d2N}$ and lines $\{b_1, b_2\}$,$\{b_1, b_3\}$ from $\mathbf{d2E}$.}
 \IF{$\{b_2, b_3\} \notin \mathbf{d2E}$,}
  \STATE{Write $\{b_2, b_3\} \in \mathbf{d2E}$.}
 \ENDIF
  \IF{$\exists$ $b_j$ S.T. $e\_b_1\_b_j$ or $e\_b_j\_b_1 \in \mathbf{d2H}$,}
    \STATE{Append list entries in $e\_b_1\_b_j$ or $e\_b_j\_b_1$ to list $e\_b_2\_b_3 \in \mathbf{d2H}$.}
    \STATE{Remove $e\_b_1\_b_j$ or $e\_b_j\_b_1$ from $\mathbf{d2H}$}
  \ENDIF
  \STATE{Append $[b_2,b_1,b_3]$ to list $e\_b_2\_b_3 \in \mathbf{d2H}$.}
  \IF{$t\_b_1\in \mathbf{d2H}$,}
    \STATE{Append the hashable map $\{t\_b_1 : \mathbf{d2H}(t\_b_1)\}$ to list $e\_b_2\_b_3\in \mathbf{d2H}$.}
    \STATE{Remove $t\_b_1$ from $\mathbf{d2H}$.}
  \ENDIF
  \STATE{Pass $\mathbf{d2N}$, $\mathbf{d2E}$ and $\mathbf{d2H}$ to \cref{alg:collapselasso}.}
\ENDWHILE
\RETURN $\mathbf{d2N},\mathbf{d2E}, \mathbf{d2H}$
\end{algorithmic}
\end{algorithm}

\begin{algorithm}
\caption{Reduce sparsely connected triangle}
\label{alg:collapselasso}
\begin{algorithmic}
\IF{$\exists$ $a_1 \in \mathbf{d2N}$ with $\mathbf{deg}(a_1) < 2$,}
\WHILE{$\exists$ $a_1 \in \mathbf{d2N}$ with $\mathbf{deg}(a_1) < 2$,}
  \STATE{Remove $a_1$ from $\mathbf{d2N}$ and line $\{a_1,a_2\}$ from $\mathbf{d2E}$.}
    \IF{$t\_a_1 \in \mathbf{d2H}$,}
      \STATE{Append $a_2$ to each array stored in list $t\_a_1\in \mathbf{d2H}$.}
      \STATE{Append all arrays in list $t\_a_1\in \mathbf{d2H}$ to list $t\_a_2 \in \mathbf{d2H}$.}
      \STATE{Remove $t\_a_1$ from $\mathbf{d2H}$.}
    \ELSE
      \STATE{Write array $[a_1,a_2]$ to list $t\_a_2\in \mathbf{d2H}$.}
 \ENDIF
\ENDWHILE
\STATE{Prepend hash table $\{e\_b_2\_b_3: \mathbf{d2H}(e\_b_2\_b_3)\}$ to list $t\_a_2\in \mathbf{d2H}$.}
\STATE{Remove $e\_b_2\_b_3$ from $\mathbf{d2H}$.}
\ENDIF
\RETURN $\mathbf{d2N},\mathbf{d2E}, \mathbf{d2H}$
\end{algorithmic}
\end{algorithm}

\cref{alg:nodetoedge} maps nodes to edges and tracks these reductions sequentially in the hashable map $\mathbf{d2H}$. Whenever $b_1$ is mapped to the edge $\{b_2,b_3\}$, if $\{b_1, b_2\}$ or $\{b_1, b_3\} \in \mathbf{d2H}$, we write all preceding mappings to the list $e\_b_2\_b_3$ when $\{b_1, b_2\}$ and $\{b_1, b_3\}$ are removed. We enforce a similar condition whenever a generalized tree is associated to the node $b_1$.  If $t\_b_1\in \mathbf{d2H}$, these maps are appended, as a hashable map, to the list $e\_b_2\_b_3$. The subroutine, \cref{alg:collapselasso}, is a modification of \cref{alg:degree1} which tracks the collapse of sparsely connected triangles. Knowing that a degree one node is produced under \cref{alg:nodetoedge} if and only if the routine breaks a sparsely connected triangle, \cref{alg:collapselasso} stores the list $e\_b_2\_b_3$ under in the root of the generalized tree subsequently collapsed. The root is defined by the final iteration of the degree one reduction. The design and inversion of these data structures is described in detail in \cref{sec:data}.

\begin{cor}\label{cor:deg2kron}
Recursively collapsing degree degree two nodes and sparsely connected triangles, as in \cref{alg:nodetoedge} and \cref{alg:collapselasso}, is compatible with the iterative Kron reduction, with reference nodes defined by $\dtwo\bN$. The reduced network, therefore, satisfies \cref{hyp:inductive} and \cref{hyp:lossless}.
\end{cor}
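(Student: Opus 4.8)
The plan is to mirror the argument used for \cref{cor:deg1kron}: show that every elementary operation performed by \cref{alg:nodetoedge} and \cref{alg:collapselasso} is realized by a single step of the Kron reduction eliminating an interior node lying outside $\dtwo\bN$, and then invoke the quotient property of \cref{theorem:iteration} to glue these steps into one Kron reduction with reference set $\dtwo\bN$.

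First I would dispatch the basic move. Each time \cref{alg:nodetoedge} replaces a degree-two node $b_1$ by the edge $\{b_2,b_3\}$, \cref{lemma:simpledegree2} shows this is exactly the single-step Kron reduction eliminating $b_1$, affecting only the admittances at $b_2$ and $b_3$. The one point requiring care — and the point I expect to be the main obstacle — is reconciling the algorithm's \emph{prohibition of double edges} with the algebra of the Schur complement. When $\{b_2,b_3\}\notin\dtwo\bE$ the correspondence is immediate. When $\{b_2,b_3\}$ is already an edge, the algorithm refrains from creating a parallel line, whereas the Kron reduction updates the single entry $Q_{n-2,n-1}$ to $Q_{n-2,n-1}-\tfrac{Q_{n-2,n}Q_{n-1,n}}{Q_{n,n}}$; I would stress that these agree, because the Schur complement automatically \emph{combines} the admittance of the contracted path with the pre-existing line into a single line, which is precisely the simple graph the algorithm maintains.

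Next I would handle the generalized-tree collapse. By \cref{lemma:gentree}, a degree-one node is produced during \cref{alg:nodetoedge} if and only if the routine has just broken a sparsely connected triangle, and \cref{alg:collapselasso} responds by recursively collapsing the resulting degree-one node(s) into the terminal root. Each such collapse is a single-step Kron reduction by \cref{lemma:deg1step}; alternatively, one may package the entire contraction of a sparsely connected triangle $\{b_{n-2},b_{n-1},b_n\}$ into its root as the two-node Kron reduction of \cref{lemma:sparsetrikron}. Either description shows that this compound move, too, eliminates only nodes outside $\dtwo\bN$ and leaves the surviving reference nodes otherwise untouched.

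Finally I would assemble the global statement. The full run of the two algorithms is a finite sequence of the elementary reductions above, each eliminating one node (or, for a triangle, a pair) drawn from $\bN\setminus\dtwo\bN$. Well-definedness at every stage is guaranteed by \cref{theorem:iteration}(iii) together with \cref{lemma:invertibleadjacency}: since \cref{hyp:inductive} is preserved at each step, every intermediate loopy Laplacian is invertible and each single-node Schur complement exists. Applying the quotient property \cref{theorem:iteration}(ii) inductively, the composition of these steps coincides with the single Kron reduction taking $\dtwo\bN$ as reference nodes. Preservation of \cref{hyp:inductive} and \cref{hyp:lossless} for the reduced network then follows directly from statements (iii) and (iv) of \cref{theorem:iteration}.
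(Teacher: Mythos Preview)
Your proposal is correct and follows essentially the same approach as the paper: invoke \cref{lemma:simpledegree2} and \cref{lemma:sparsetrikron} for the elementary steps, then appeal to \cref{theorem:iteration} to compose them into the single Kron reduction with reference set $\dtwo\bN$. You actually go beyond the paper by explicitly reconciling the algorithm's prohibition of double edges with the Schur complement's additive update of $Q_{n-2,n-1}$, a point the paper's brief proof leaves implicit.
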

\begin{proof}
\cref{lemma:simpledegree2} and \cref{lemma:sparsetrikron} guarantees that a single iteration of \cref{alg:nodetoedge} and \cref{alg:collapselasso} is compatible with the Kron reduction. \cref{theorem:iteration} shows that this procedure can be performed recursively, and equivalently, to the Kron reduction produced with $\dtwo\bN$ as reference nodes.
\end{proof}

\begin{cor}\label{cor:deg2graphpath}
Let $b_1, b_2\in \dtwo\bN$, the network reduced via \cref{alg:degree1}, \cref{alg:nodetoedge} and \cref{alg:collapselasso}.  There exists an edge between $b_1$ and $b_2$ if and only if there exists a path from $b_1$ to $b_2$ in $\{\bN, \bE\}$ such that all interior nodes in the path belong to $\bN \setminus \dtwo \bN$. That is, \cref{alg:degree1}, \cref{alg:nodetoedge} and \cref{alg:collapselasso} preserve graph paths. 
\end{cor}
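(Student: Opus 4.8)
The plan is to prove a single invariant that holds throughout the entire sequence of elementary reductions carried out by \cref{alg:degree1}, \cref{alg:nodetoedge} and \cref{alg:collapselasso}, and then to read off the corollary from that invariant at termination. Write $\bN_{\rm cur}, \bE_{\rm cur}$ for the node and edge sets at an arbitrary intermediate stage, and call $\bN \setminus \bN_{\rm cur}$ the \emph{removed} nodes. The invariant I would establish is: for every pair of distinct $u, v \in \bN_{\rm cur}$, the edge $\{u,v\} \in \bE_{\rm cur}$ if and only if there is a path from $u$ to $v$ in $\{\bN, \bE\}$ all of whose interior nodes are removed. Taking the final stage, where $\bN_{\rm cur} = \dtwo\bN$ and $\bE_{\rm cur} = \dtwo\bE$, this statement is precisely the claimed biconditional, so both directions of \cref{cor:deg2graphpath} follow at once.

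First I would observe that every modification of the node/edge sets performed by the three routines is one of exactly two elementary types: (a) removal of a current degree-one node together with its single incident edge (the body of \cref{alg:degree1} and the while-loop of \cref{alg:collapselasso}); and (b) removal of a current degree-two node $b_1$ with neighbours $b_2, b_3$, deletion of $\{b_1,b_2\}$ and $\{b_1,b_3\}$, and insertion of $\{b_2,b_3\}$ when it is not already present (the body of \cref{alg:nodetoedge}); the hash-table operations leave $\bN_{\rm cur}, \bE_{\rm cur}$ untouched. The base case is immediate: before any reduction the removed set is empty, so the only admissible paths are single edges and the invariant reduces to a tautology. It therefore suffices to show that each of the two elementary steps preserves the invariant.

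The heart of the argument is the following consequence of the inductive hypothesis: a node $w$ with current degree $d$ can serve as an \emph{interior} vertex of an admissible path (one whose interior lies in the removed set) only between two of its current neighbours. Indeed, if a simple path $P$ from $u$ to $v$ has $w$ interior, then splitting $P$ at $w$ yields two admissible sub-paths, from $u$ to $w$ and from $w$ to $v$; the invariant applied at the stage before $w$ is removed forces $\{u,w\}$ and $\{w,v\}$ to be current edges, so $u$ and $v$ are current neighbours of $w$. For type (a), $w = b_1$ has a single neighbour, so no admissible path can use $b_1$ as an interior vertex; hence enlarging the removed set by $b_1$ creates no new connections, the edge set among surviving pairs is unchanged, and the invariant persists. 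For type (b), $w = b_1$ has exactly the two neighbours $b_2, b_3$, so the only surviving pair whose connectivity can change is $\{b_2, b_3\}$; and there the concatenation of the admissible $b_2$--$b_1$ and $b_1$--$b_3$ paths exhibits an admissible $b_2$--$b_3$ path through $b_1$, matching the edge $\{b_2,b_3\}$ that the algorithm guarantees to be present after the step. For every other surviving pair the edge set and the admissible-path condition are both unchanged, so the invariant is maintained.

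The main obstacle to watch is the mismatch between the \emph{current} degree, which governs which nodes the algorithms remove, and the \emph{original} graph in which the paths live: a removed node may have had large original degree, so I cannot argue combinatorially inside $\{\bN,\bE\}$ alone and must route every step through the inductive invariant, exactly as above. Two minor technical points also need care. When concatenating two admissible paths through $b_1$ the resulting walk need not be simple, so I would invoke the standard reduction of a walk to a simple path on a subset of its vertices to keep all interior nodes within the removed set. And the multi-edge prohibition in \cref{alg:nodetoedge} must be handled by noting that, whether or not $\{b_2,b_3\}$ was already present, it is an edge after the step and an admissible $b_2$--$b_3$ path exists, so the biconditional holds in either branch; the branch that leaves a new degree-one node merely defers to a type-(a) step in \cref{alg:collapselasso}, which is already covered.
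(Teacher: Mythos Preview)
Your argument is correct, and it takes a genuinely different route from the paper. The paper does not argue combinatorially at all: it simply invokes \cref{cor:deg1kron} and \cref{cor:deg2kron} to identify the output of \cref{alg:degree1}, \cref{alg:nodetoedge} and \cref{alg:collapselasso} with the Kron reduction of $\{\bN,\bE\}$ taking $\dtwo\bN$ as reference nodes, and then appeals to Theorem~III.4 of D\"orfler--Bullo, which states (for general Kron reductions) that $\{b_1,b_2\}$ is an edge of the reduced network iff there is a path in the full network whose interior lies entirely in the eliminated set. Thus the paper's proof is a two-line citation, but it rests on the linear-algebraic lemmas establishing Kron equivalence together with an external result.

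Your proof, by contrast, is self-contained and purely graph-theoretic: you maintain an explicit edge/path invariant through each elementary step and never touch the loopy Laplacian or the Schur complement. The key observation you isolate---that a newly removed node $w$ of current degree $d$ can only be an interior vertex of an admissible path between two of its current neighbours---is exactly what makes the induction close, and your handling of the walk-to-simple-path reduction and of the ``edge already present'' branch of \cref{alg:nodetoedge} is accurate. What you gain is independence from both the Kron-equivalence lemmas and the cited D\"orfler--Bullo theorem; what the paper gains is brevity and a unified explanation (the path-preservation is a structural property of \emph{any} Kron reduction, not a special feature of these particular algorithms). Either approach is acceptable here; yours would be preferable if one wanted the corollary to stand without \cref{hyp:inductive} or without the Schur-complement machinery.
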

\begin{proof}
\cref{cor:deg1kron} and \cref{cor:deg2kron} demonstrate that the reduced network $\dtwo\bN$ defined by \cref{alg:degree1}, \cref{alg:nodetoedge} and \cref{alg:collapselasso}, is equivalent to the Kron reduction of $\bN$ with $\dtwo\bN$ chosen as reference nodes.  But Theorem III.4 of \cite{dorfler2013kron} demonstrates that the Kron reduction preserves graph paths --- the above statement is a simple corollary.
\end{proof}

\cref{alg:nodetoedge} and \cref{alg:collapselasso} reduce the sets $\mathbf{d1N},\mathbf{d1E}$ to the sets $\mathbf{d2N},\mathbf{d2E}$ with 9716 nodes and 18,700 edges. \cref{fig:d1_d2_dist} summarizes this reduction with the histogram of the number of nodes per reduction in $\mathbf{d2H}$ and the distribution of the degrees of nodes in $\mathbf{d2N}$. The mean degree of nodes in $\mathbf{d2N}$ is 3.85, with a standard deviation 1.76 and a maximal node degree of 38.  The graph density is approximately $3.9622\times 10^{-4}$.  The total number of collapsed edges in $\mathbf{d2H}$ is 9,696, with the mean number of nodes per edge is 3.88, standard deviation 4.24 and max nodes per edge 94. The total number of generalized trees collapsed in $\mathbf{d2H}$ is 2,579 with a mean of 4.38 nodes per generalize tree, standard deviation of 3.65 and max nodes per generalized tree 56. We note, generalized trees which have been mapped to edges are considered only as nodes within the meta-edge of their final reduction. Likewise, we do not distinguish meta-edges which have been collapsed into generalized trees from the root super node where their reduction terminates.

\begin{figure}\label{fig:d1_d2_dist}
\includegraphics[width=\linewidth]{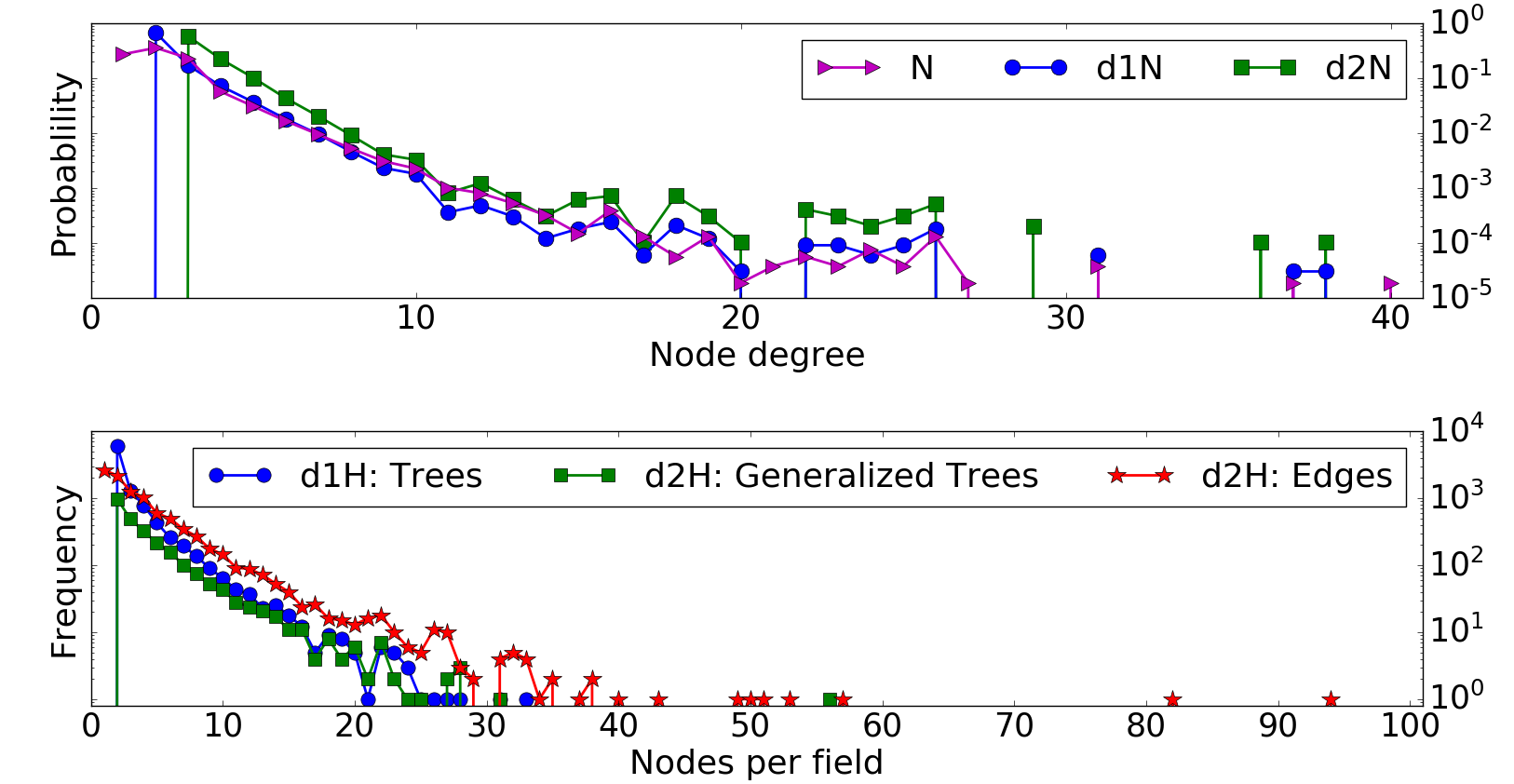}
\center
\caption{\textbf{Upper:} Distribution of the degrees of nodes in $\mathbf{N}$, $\mathbf{d1N}$ and $\mathbf{d2N}$.  The distance between the distribution for $\done \bN$ and $\bN$ is approximately $0.3588$.  The distance between the distribution for $\done \bN$ and $\bN$ is approximately $1.4477$. \textbf{Lower} Histogram of nodes per tree in $\mathbf{d1H}$, generalized tree in $\mathbf{d2H}$ and meta-edge in $\mathbf{d2H}$.}
\end{figure}

We note that the degree one and degree two reductions generally fail to preserve the shape of the degree distribution for the full network.  This is to be expected as the degree distribution for the full network is strongly peaked around nodes with degree one to three, while the above algorithms ensure that the remaining network has no nodes of degree less that three.  In particular, we compute the first Wasserstein distance between the degree one reduced, degree two reduced, and unreduced networks with an L1 ground distance on the discrete bins corresponding to the nodal degrees.  The degree distribution for $\done \bN$ differs from that of $\bN$ by approximately $0.3588$, while the degree distribution for $\dtwo \bN$ differs from that of $\bN$ by approximately $1.4477$.  The distance of $1.4477$ between the degree distributions for $\dtwo \bN$ and $\bN$ can be equated in terms of the work of translating the degree distribution for $\bN$ by almost $1.5$ bins to the right.

  However, the bias introduced in the degree distribution in the reduced network is justified given the design features of the nodes which are reduced, and their physical role in multiscale power grids.  Typically, the nodes reduced are formed by coherent sub-networks which are weakly coupled to the remaining network.  Despite the distortion of the degree distribution in the reduced network, the topological approach used in selecting reference nodes for the Kron reduction has the advantage of preserving the sparsity of the original network.  Particularly, the above approach has the benefit of maintaining the weak and strong coupling between coherent sub-networks of nodes, which is not generally guaranteed with an arbitrary selection of reference nodes.

It is still possible, however, that an unintended mixing of loads, generation, distribution and transmission structures will occur with the above topological approach to nodal aggregation.  For this reason, a user may invert reductions produced by \cref{alg:degree1}, \cref{alg:nodetoedge} and \cref{alg:collapselasso} post-facto to refine the resolution on a particular aggregation, and include factors such as line admittances and nodal voltage in selecting reference nodes.  This will be a central concern as we introduce additional steps to produce further reductions to the network, which will aggregate higher degree coherent sub-networks.

\subsection{Triangular reductions}

Our work in the previous sections shows that \cref{alg:degree1}, \cref{alg:nodetoedge} and \cref{alg:collapselasso} produce a network, $\mathbf{d2N}$, $\mathbf{d2E}$, that is of a scale which permits qualitative analysis.  The compatibility of the topological reduction with the iterative Kron reduction implies that under the assumption of a lossless, inductive, steady state network, the usual analysis with dynamic swing equations may be applied for analysis of optimal power-flow and control \cite{dorfler2013kron}.  The dynamic swing equation computations in transient stability studies require a reduced model, even when performed off-line, from the original 53,155 bus network.  Here, a network with less than 10,000 nodes is feasible for the off-line simulation.  The size of the network may remain a bottleneck, however, for on-line computations.  On-line applications vary, some of which require optimal power-flow or dynamic and nonlinear simulation.   Likewise, further reductions to the network may be necessary to make on-line parameter estimation in an operational window feasible, where inertial and damping coefficients may only remain constant on the order of minutes \cite{lokhov2018online}. 

\begin{figure}[ht]\label{fig:trred}
\center
\includegraphics[width=.8\linewidth]{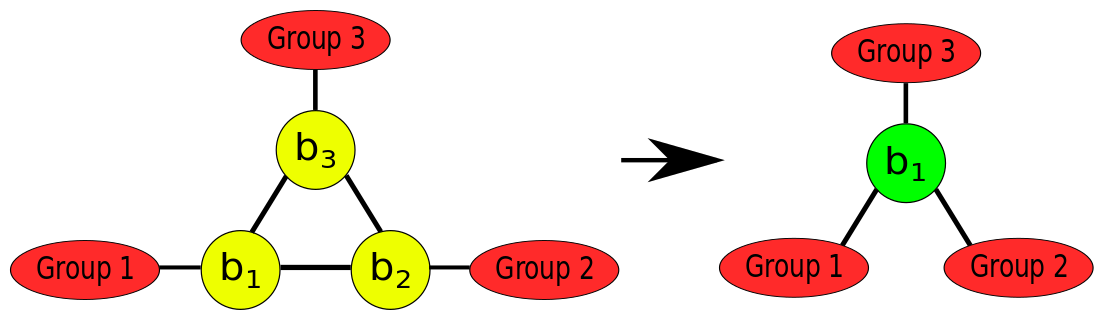}
\caption{A ``pure'' triangular configuration.  The triangle is collapsed to $b_1$, which gains all lines from $b_2$ and $b_3$ avoiding self lines.}
\end{figure}

It is possible to collapse higher degree coherent structures, such as non-sparse triangular configurations, but there is greater subtlety. The degree one and degree two node reductions will produce an unambiguous model for net power-flow via a direct implementation of the Kron reduction on the full network with the reference nodes defined via \cref{alg:degree1}, \cref{alg:nodetoedge} and \cref{alg:collapselasso}.  We are motivated to perform a similar reduction to ``pure'' triangular reductions pictured in \cref{fig:trred} where there are three nodes, each of degree three and similar nominal voltage, forming a link between three large connected groups of nodes.  In this case we wish to collapse the three nodes $\{b_1,b_2,b_3\}$ on the left to a single super node of degree three on the right, such that the super node: (i) receives all lines from nodes $\{b_1,b_2,b_3\}$, excluding double and self lines; (ii) combines the currents of the three nodes.  In the transient stability regime, this reduction preserves the net power-flow through the triangle formed by nodes $\{b_1,b_2,b_3\}$ into the other sub-networks in \cref{fig:trred}.  Given that the three nodes $\{b_1,b_2,b_3\}$ are of similar nodal voltage, this may also approximate the mid-term stability dynamics.

While ``pure'' triangular configurations are easy to picture, they are rare and many other triangular configurations exist throughout the network.  Generically in \cref{fig:trred}, the three groups of nodes connected to the buses $\{b_1,b_2,b_3\}$ may be interconnected and the set of nodes in each group overlapping. Recursively collapsing generic triangular configurations to super nodes may generally produce multiple lines between nodes, raise and lower the degree of the super nodes (and the surrounding nodes), and produce non-unique final reductions.  For example in \cref{fig:tr_deg_loss}, when triangle formed by nodes $\{b_1,b_2,b_3\}$ is collapsed to a super node in the right side, the degree of node $b_4$ actually decreases, even though it was not directly included in the reduction. In this section, we will develop further reduction steps to produce a physically consistent network, but due to these subtleties, the methodology will become slightly more ad-hoc.

\begin{figure}[ht]\label{fig:tr_deg_loss}
\center
\includegraphics[width=.6\linewidth]{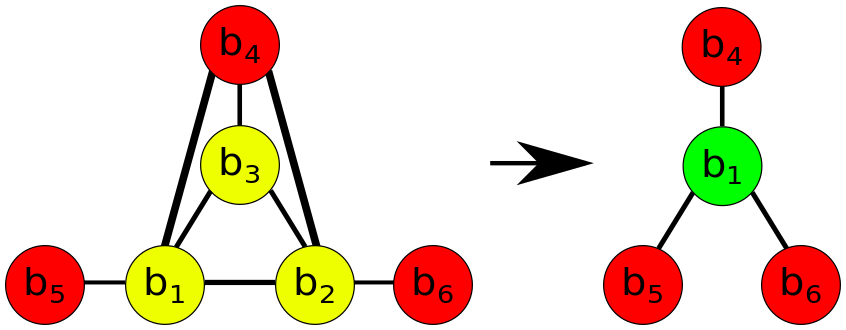}
\caption{A generic triangular configuration.  The triangle formed by nodes $\{b_1,b_2,b_3\}$ is collapsed to $b_1$, which gains all lines from $b_2$ and $b_3$, avoiding self lines.  In this case, the degree of the node $b_4$ will decrease by two, as the lines connecting $b_4$ to $b_1,b_2$ and $b_3$ are combined.}
\end{figure}

\begin{rmk}
Though we will define a reduced network via collapsing triangles that is power-flow equivalent in the same sense as the Kron reduction, this type of reduction is fundamentally different from the earlier algorithms and is not equivalent to any Kron reduced network.
\end{rmk}

Arbitrarily collapsing triangles in the network may strongly bias the distribution of node degrees, and possibly change the sparsity of the network. To prevent this bias, we permit the collapse of a triangle only if each node in the configuration does not exceed a specified degree. Recursively mapping triangles to nodes, the degrees of nodes within the reduced network will increase and decrease, so that we always refer to the degree of each node in the \emph{current iteration} of the algorithm.  The degree threshold introduces a tuning factor into our algorithm with which we balance the scale of the collapse with preserving the graph sparsity and degree distribution for nodes in $\mathbf{d2N}$. 

For any degree threshold $\mathbf{dThr}$, the maximal degree of a \emph{super node} produced by collapsing a triangular configuration is given by $3(\mathbf{dThr} - 2)$.  For example, assume that the set of nodes $\{b_1, b_2, b_3\}$ forms a triangle and each node has the maximum of $\mathbf{dThr}$ lines.  Let $i,j\in \{1,2,3\}$ and $k\in \{4, \cdots, n\}$.  A super node of degree $3(\mathbf{dThr} -2 )$ is produced collapsing $b_1,b_2,b_3$ if, when $b_i$ is connected to the node $b_k$, then $b_j$ is not connected to $b_k$ for each $j\neq i$.  In particular, each node $b_j$ contributes $\mathbf{dThr} -2$ distinct lines to the aggregated super node, after the lines that connect $b_1,b_2,$ and $b_3$ are removed.  We choose $\mathbf{dThr}=6,7$ and $8$, which produce a super node of degree at most $12,15$ and $18$ respectively.

A solely graph based reduction of triangles may also combine transmission and distribution nodes in a way which distorts the dynamics in the transient stability regime. For instance, if the ``pure'' triangle in \cref{fig:trred} is formed by two nodes, $b_1$ and $b_2$, of high nominal voltage while $b_3$ is of low nominal voltage, the super node produced from clustered triangle will confer stronger coupling between the three separate sub-networks (groups one, two, and three) than actually exists. To prevent non-coherent mixing of transmission and distribution sub-networks, we restrict our reductions only to the nodes in $\mathbf{d2N}$ which fall below an additional voltage threshold: we will permit a reduction to a triangle if every node in the configuration additionally falls below a specified nominal voltage. We choose voltage thresholds of $110, 138, 230, 345$ nominal KV (standard low, medium and high voltages for different transmission grid lines), and for reference, compare results without a voltage threshold.

Due to the earlier steps, the nodes in $\mathbf{d2N}$, and edges in $\mathbf{d2E}$, may represent multiple nodes due to reductions performed in \cref{alg:degree1}, \cref{alg:nodetoedge} and \cref{alg:collapselasso}. Our analysis leads to \cref{alg:triangular}, we introduce the following notation.
\begin{definition}
Let $\mathbf{vThr}$ be a specified voltage threshold. Define $\mathbf{nL}$ to be a list of nodes in $\mathbf{d2N}$ excluding any node(s)
\begin{itemize}
 \item $b_1$ such that $t\_b_1 \in\mathbf{d2H}$ contains a node of nominal voltage above $\mathbf{vThr}$,
 \item $b_1,b_2$ such that $e\_b_1\_b_2 \in \mathbf{d2H}$ contains a node of nominal voltage above $\mathbf{vThr}$
 \item or $b_1\in \mathbf{d2N}$ which has a nominal voltage above $\mathbf{vThr}$.
\end{itemize}
The data structure $\mathbf{triH}$ is a hashable map $\{``field":``data"\}$ where ``data'' is an ordered list. Entries of these lists are hashable maps of the form $\{ ``b_1\text{''} : \mathbf{lines}(b_1)\}$ where $\mathbf{lines}(b_1)$ is a list of lines associated to $b_1$ in $\mathbf{d2E}$.
\end{definition}

\begin{algorithm}
\caption{Greedy triangular reduction}
\label{alg:triangular}
\begin{algorithmic}
\STATE{Define: $\mathbf{triN}\coloneqq\mathbf{d2N}$, $\mathbf{triE}\coloneqq\mathbf{d2E}$, $\mathbf{triH}\coloneqq$ empty hashable map.}
\STATE{$\mathbf{nL}\coloneqq$ random permutation of $\mathbf{nL}$.}
\STATE{$\mathbf{dThr}\coloneqq$ degree threshold, $K\coloneqq 0$, $STOP\coloneqq \text{length}(\mathbf{nL})$.}
\WHILE{$K<STOP$,}
 \STATE{$K\coloneqq K+1$, $b_1 \coloneqq \mathbf{nL}(K)$.}
  \IF{$\mathbf{deg}(b_1) < \mathbf{dThr}$,}
    \WHILE{$\exists$ a triangular configuration defined by $\{b_1,b_2\},\{b_1,b_3\}, \{b_2,b_3\} \in \mathbf{triE}$ where $\mathbf{deg}(b_2),\mathbf{deg}(b_3)<\mathbf{dThr}$ and $b_2,b_3 \in \mathbf{nL}$,}
      \FOR{$b_i \in \{b_2,b_3\}$,}
        \STATE{Append $\{ ``b_i\text{''} : \mathbf{lines} (b_i) \}$ to $tri\_b_1 \in \mathbf{triH}$. }
        \STATE{Append all entries in $tri\_b_i$ to $tri\_b_1 \in \mathbf{triH}$.}
        \STATE{Remove $tri\_b_i$ from $\mathbf{triH}$.}
        \FOR{each $b_j$ such that $\{b_i,b_j\} \in \mathbf{triE}$,}
          \STATE{Write $\{b_1, b_j\}$ to $\mathbf{triE}$ excluding double and self lines.}
          \STATE{Remove $\{b_i,b_j\}$ from $\mathbf{triE}$.}
        \ENDFOR
       \STATE{Remove $b_i$ from $\mathbf{triN}$ and from $\mathbf{nL}$.}
      \ENDFOR
      \STATE{$K \coloneqq 0$, $\mathbf{nL}\coloneqq$ random permutation of $\mathbf{nL}$, $STOP\coloneqq\text{length}(\mathbf{nL})$.}
    \ENDWHILE
  \ENDIF
\ENDWHILE
\FOR{$tri\_b_1 \in \mathbf{triH}$,}
  \STATE{Append $\{``b_1\text{''}: \mathbf{lines}(b_1)\}$ to $tri\_b_1 \in \mathbf{triH}$.}
\ENDFOR
\RETURN{ $\mathbf{triN},\mathbf{triE}, \mathbf{triH}$}
\end{algorithmic}
\end{algorithm}

In each iteration of \cref{alg:triangular}, we perform a greedy search for permissible triangles connected to a base node $b_1$, i.e. all triangles for which the nodes fall below the specified voltage and degree thresholds.  We recursively collapse all such triangles into $b_1$ by removing the two associated nodes from $\mathbf{triN}$ and connecting all their lines to $b_1$, avoiding double and self lines. We perform this search until there are no permitted triangles which include $b_1$ and start the search again from a new base node. The base node from which we search for triangles is randomized upon each iteration. Thus, for each combination of voltage and degree threshold, we run an ensemble of experiments to find a distribution for our results. We plot the distribution of the degrees of nodes in $\mathbf{triN}$ over $10^3$ experiments in \cref{fig:tr_red_deg_dist}; for reference we include the degree distribution of nodes in $\mathbf{d2N}$. Note, while the triangular reduction produces nodes of degree at most 18, the reduction may \emph{lower the degree of any node} if it is connected to at least two nodes in a permissible reduction, as shown in \cref{fig:tr_deg_loss}. In \cref{fig:tr_red_deg_dist}, the newly apparent nodes of degree greater than 18 correspond to this phenomena, where various nodes of degree greater than 18 have been lost, and newly apparent nodes of degree above 18 are visible in the triangle reduced networks. 

\begin{figure}[t]\label{fig:tr_red_deg_dist}
\center
\includegraphics[width=.95\linewidth]{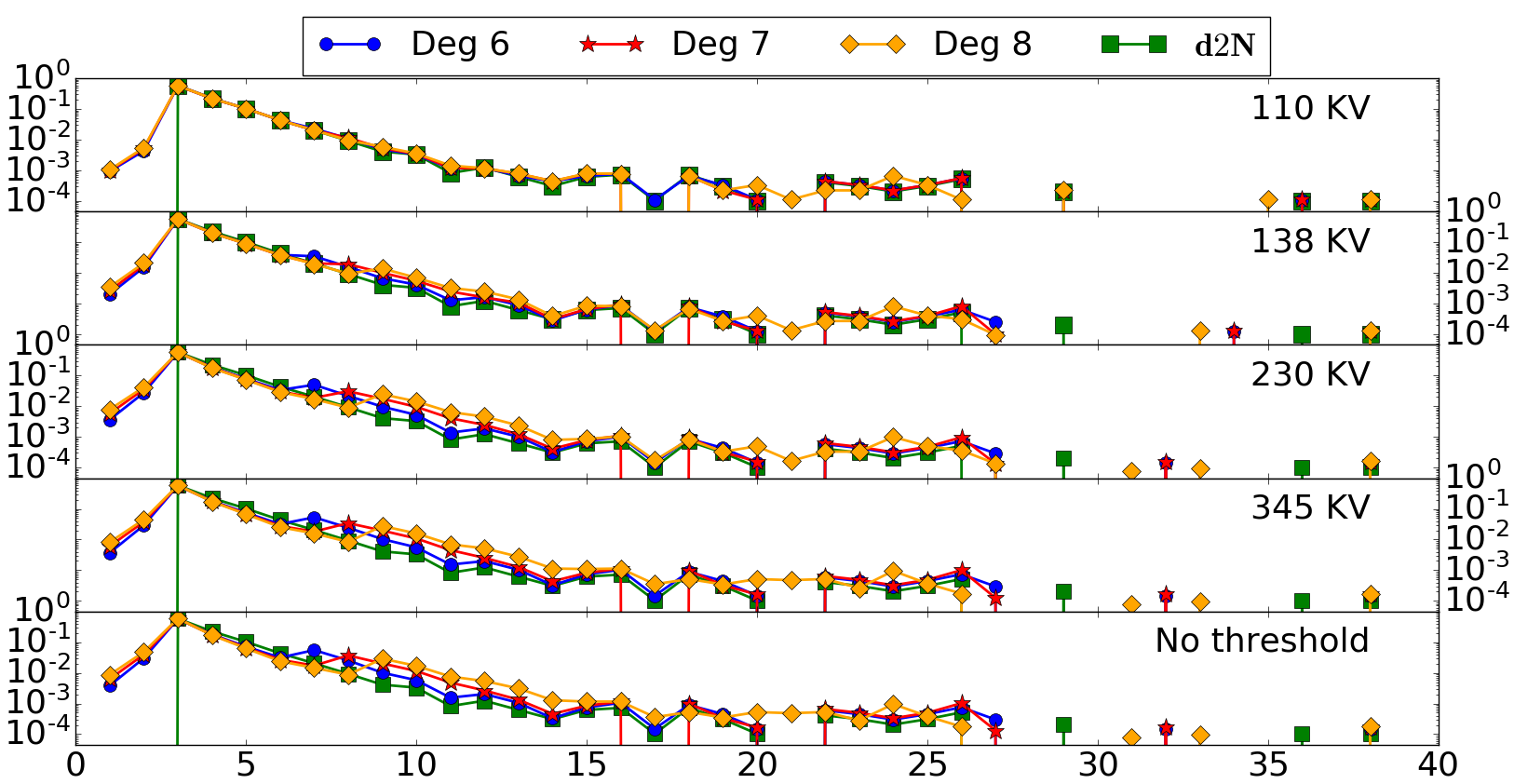}
\caption{Distribution of nodal degrees in $\mathbf{triN}$, for each parameter setting, versus the degree. \textbf{Top to bottom}: figures ascending in voltage threshold with degree thresholds plotted together.}
\end{figure}
\begin{figure}[ht]\label{fig:tr_red_deg_dist_wasserstein}
\center
\includegraphics[width=.95\linewidth]{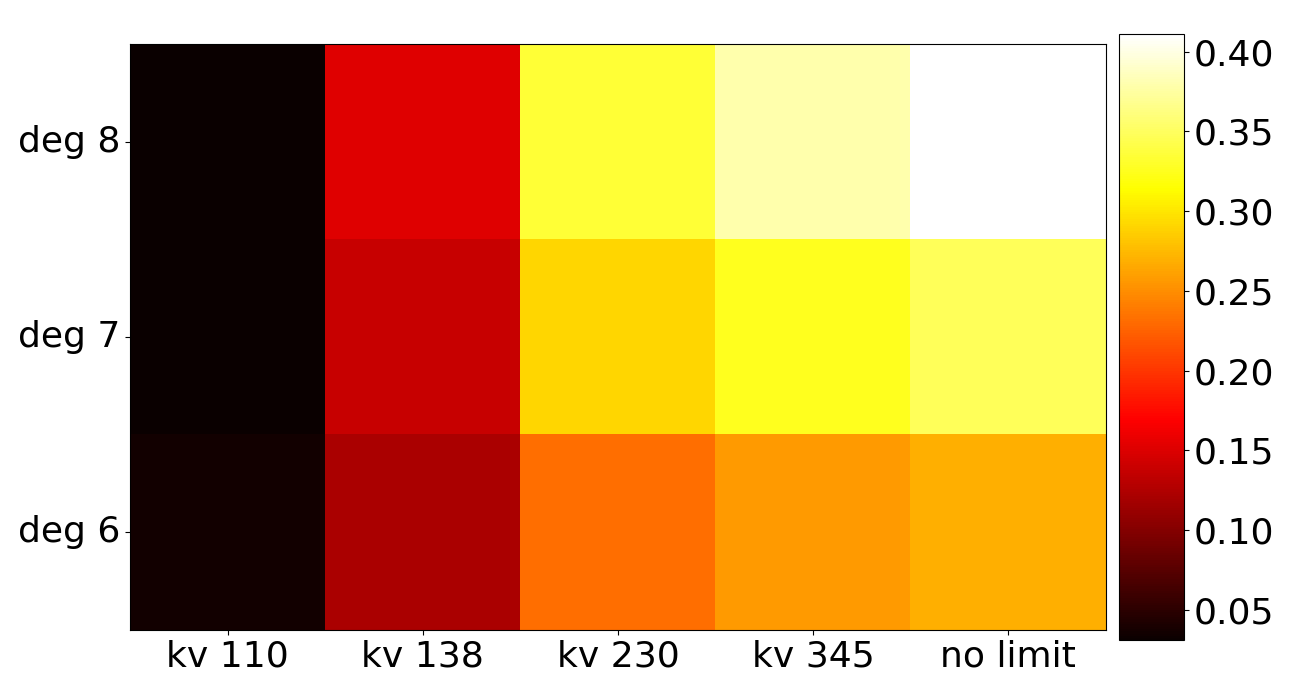}
\caption{The first Wasserstein distance, with L1 ground distance, between the triangle reduced degree distributions and the degree distribution for $\dtwo\bN$.}
\end{figure}

The smallest network produced by \cref{alg:triangular} has 5,560 nodes and 11,079 edges --- this is used as a reference for the possible limits of the triangular reduction, performed without a voltage threshold.  Even without the voltage threshold, the degree threshold in the reduction maintains the sparsity of the graph, which has a density of approximately $7.1690\times 10^{-4}$ in the smallest realization of the triangular reduction, pictured in \cref{fig:tr_red_deg_dist}.  In each of the degree threshold and voltage threshold settings, we additionally compute the first Wasserstein distance between the degree distribution for the greedy triangular reduction and the reference $\dtwo \bN$ degree distribution.  The distance between the triangular reduction distribution and the degree two reduction is shown in \cref{fig:tr_red_deg_dist_wasserstein}.   We note that, although the degree distribution for $\dtwo \bN$ differs significantly from the original network, the bias introduced is justified by the physical interpretation of collapsing generalized trees and edges.  

We wish, thus, to compare the degree distributions of the triangular reductions with that of $\dtwo \bN$ to determine to what extent the triangular reduction: (i) erroneously re-introduces degree one and degree nodes, and/or (ii) distorts the degree distribution of the meshy, densely connected sub-networks.  In particular, we see that there is little difference between the degree distribution of the triangular reduced network with a voltage threshold less than or equal to 138 KV, with each of the degree thresholds.  However, raising the voltage threshold to 230 KV and above, significant differences emerge in the degree distributions.  The upper threshold of a distance of $0.4$ can be equated with the work to move almost half the distribution one bin to the right.

\begin{figure}[h]\label{fig:tr_size_red_dist}
\center
\includegraphics[width=.95\linewidth]{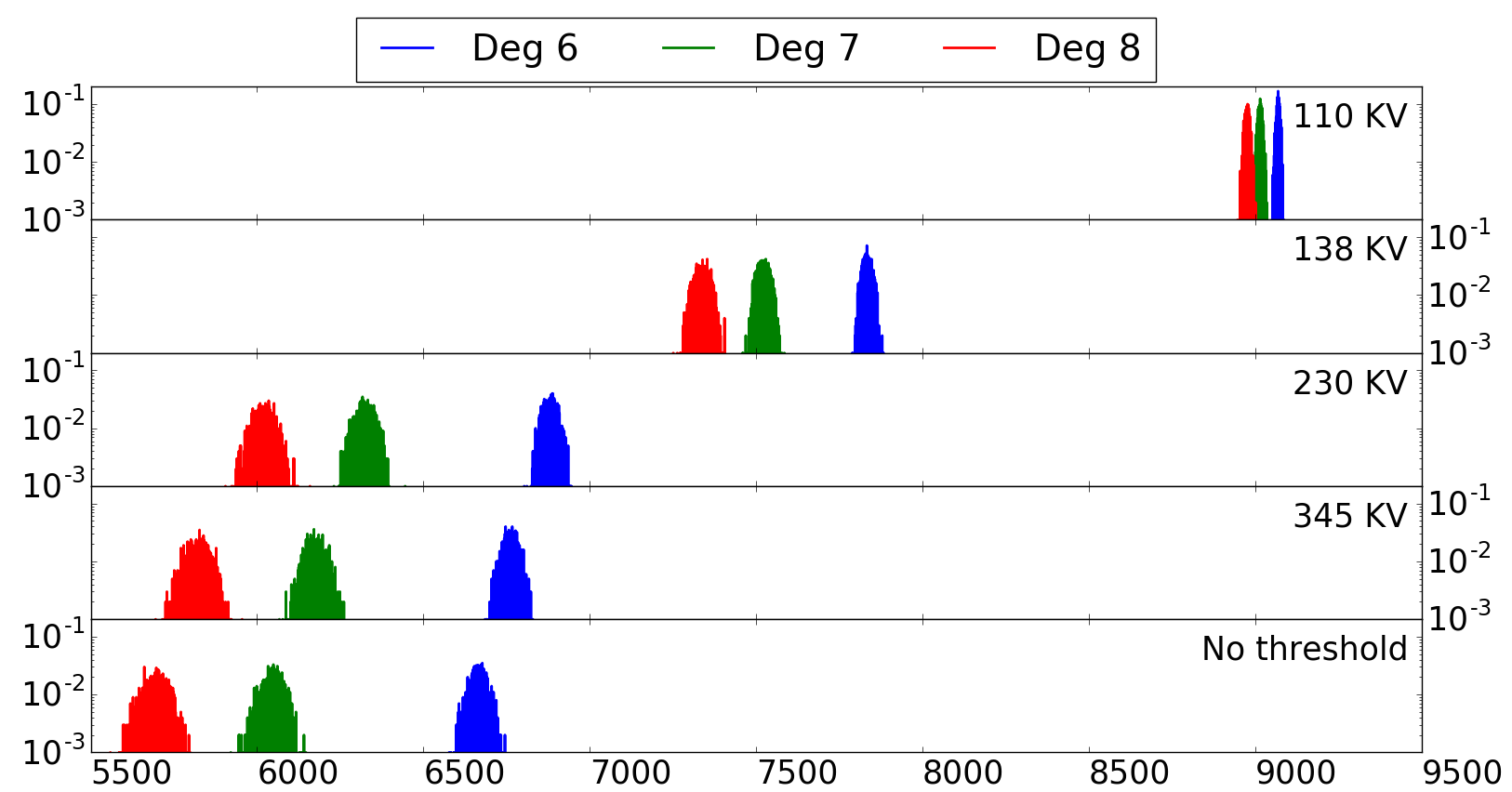}
\caption{Distribution of size of $\mathbf{triN}$ for each threshold setting.}
\end{figure}

We likewise see this behavior when we plot the distribution of \emph{size} of the reduced network $\mathbf{triN}$ in \cref{fig:tr_size_red_dist} with respect to the various threshold settings over 1000 initializations.  Sensitivity in the size of the reduced network to the voltage threshold emerges as we pass both from 110 KV to 138 KV, and from 138 KV to 230 KV thresholds respectively.  The distributions of the network size are all close and strongly peaked for the 110 KV threshold, indicating that few nodes in the distribution sub-network remain un-clustered after the degree one and degree two steps.  However, the dramatic reductions to network size passing to the 138 KV threshold indicates that the nodes of the distribution sub-network, and the substations connecting these to the transmission network (including super nodes which combine the two), possesses a loopy configuration that can be clustered by the triangular reduction for a significant gain.  This is also dynamically meaningful, as by construction, the reduction is only aggregating distribution sub-networks and transmission sub-stations with a similar dynamical interpretation to our earlier algorithms.  

Passing to the 230 KV threshold and above, there is once again a large reduction in the network size, where the loopy structure below the high voltage transmission network can be reduced significantly. The distributions of network size for voltage thresholds above 230 KV are more closely aligned, and are instead distinguished along their degree thresholds.  These additional large reductions, however, come at the cost of mixing distribution and transmission subnetworks, and distorting the degree distributions as indicated in \cref{fig:tr_red_deg_dist_wasserstein}.  Given the significant reductions produced under the voltage threshold of 138 KV, along with small distortions of the associated degree distributions, we conclude that the greedy triangular reduction can produce additional, physically meaningful clustering of the triangular meshes, formed at the distribution/ sub-station level of the network.  The smallest network produced under the triangular reduction, with $\mathbf{vThr}=138$ kv and $\mathbf{dThr}=8$ has 7,252 nodes, and 14,152 edges.  The mean nodal degree is approximately 3.90, with a standard deviation of 2.04, and a maximal nodal degree of 38.  The graph density is approximately $5.3826\times 10^{-4}$.

Although the topological procedure is intuitively clear, we have yet to discuss how to compute the admittances, currents, and power-flow for the triangle reduced network.  In the following, we will: (i) define the associated power-flow reductions and (ii) prove that the procedure in \cref{alg:triangular}, similar to the Kron reduction, produces a power-flow equivalent network which preserves graph paths.  This will lead us to our final analytical results, showing the ultimate consistency of our algorithms with the net power-flow, and the dynamics of the transient stability regime.

\begin{definition}\label{def:linearaggregation}
Let the loopy Laplacian $\bQ$ define an arbitrary, connected network of $n$ nodes satisfying \cref{hyp:inductive} and \cref{hyp:lossless}.  Moreover, let $\be_j \in\mathbb{R}^{n-3}$ be the $j$-th standard basis vector with all entries equal to zero except for a one in the $j$-th position.  Without loss of generality, let the nodes $\{b_{n-2}, b_{n-1}, b_n\}$ form a triangle.  Let $\alpha = \{1,\cdots,n-3\}$, we define the \textbf{linear aggregation of a triangle} formed by $\{b_{n-2}, b_{n-1}, b_n\}$ as follows:
\begin{enumerate}[(i)]
  \item the triangle reduced loopy Laplacian $\bQ^{\rm tri}$ is given by
  \begin{align}\label{eq:trinodaladmittance}
  \bQ^{\rm tri} &\triangleq 
  \begin{pmatrix}
  \bQ_{[\alpha,\alpha]} & \sum_{i=n-2}^n \sum_{j=1}^{n-3} \be_{j}Q_{i,j} \\
   \sum_{i=n-2}^n \sum_{j=1}^{n-3} \be_{j}^\T Q_{i,j} &\sum_{i,j=n-2}^n Q_{i,j}
  \end{pmatrix};
  \end{align}
  \item the reduced current vector is given as $\bC^{\rm tri} \triangleq 
  \begin{pmatrix}
  \bC_{[\alpha]}^\T & \sum_{j=n-2}^n C_j
  \end{pmatrix}^\T$;
  \item the reduced current balance equations and power-flow equations are defined as
  \begin{align}
  \bV^{\rm tri} &\triangleq  \(\bQ^{\rm tri}\)^{-1} \bC^{\rm tri} \label{eq:trredcurrent}\\
  \bS^{\rm tri} &\triangleq \bV^{\rm tri} \circ \thickbar{\bC}^{\rm tri} \label{eq:trredpower}
  \end{align}
  
\end{enumerate}  
\end{definition}
\begin{lemma}\label{lemma:welldefinedtriangle}
The linear aggregation of the nodes $\{b_{n-2}, b_{n-1}, b_n\}$ into a single node, as described in \cref{def:linearaggregation} satisfies the following:
\begin{enumerate}[(i)]
\item $\bQ^{\rm tri} \in \mathbb{C}^{(n-2) \times (n-2)}$ is an invertible, loopy Laplacian, such that equations \cref{eq:trredcurrent} and \cref{eq:trredpower} are well defined;
\item the network defined by $\bQ^{\rm tri}$ satisfies \cref{hyp:inductive} and \cref{hyp:lossless}; 
\item the triangle reduced adjacency matrix $\bA^{\rm tri}$ has $A^{\rm tri}_{i,n-2} \neq 0 $ if and only if $A_{i,k} \neq 0$ for some $k \in \{n-2,n-1,n\}$.
\end{enumerate}
\end{lemma}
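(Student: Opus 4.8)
The plan is to recognize the linear aggregation of \cref{def:linearaggregation} as conjugation of $\bQ$ by a single \emph{aggregation operator}. Let $\bP \in \R{(n-2)}{n}$ be the real matrix acting as the identity on the index set $\alpha$ and summing the last three coordinates, i.e. $(\bP\bw)_j = w_j$ for $j \leq n-3$ and $(\bP\bw)_{n-2} = w_{n-2}+w_{n-1}+w_n$. A direct block computation shows that \cref{eq:trinodaladmittance} is exactly $\bQ^{\rm tri} = \bP\bQ\bP^\T$, while the reduced current of \cref{def:linearaggregation} is $\bC^{\rm tri} = \bP\bC$. This reformulation makes the loopy-Laplacian structure of $\bQ^{\rm tri}$ transparent: symmetry is inherited from $\bQ$ since $\bP$ is real, and because $\bP^\T\b1 = \b1$ the row sums satisfy $\bQ^{\rm tri}\b1 = \bP\bQ\b1$, so each self-admittance of an $\alpha$-node is unchanged while the self-admittance of the super node is the sum of the three triangle self-admittances. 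Via the reconstruction \cref{eq:adjacencyelement} this already identifies $\bQ^{\rm tri}$ as a bona fide loopy Laplacian, which is the structural half of claim (i).

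I would prove claim (iii) next, since it also supplies connectivity for claim (ii). From $\bQ^{\rm tri} = \bP\bQ\bP^\T$ the off-diagonal entry linking an $\alpha$-node $b_i$ to the super node is $Q^{\rm tri}_{i,n-2} = \sum_{k=n-2}^n Q_{i,k}$, hence by \cref{eq:adjacencyelement} we get $A^{\rm tri}_{i,n-2} = \sum_{k=n-2}^n A_{i,k}$. The key point is that under \cref{hyp:inductive} every $A_{i,k}$ is either zero or pure-imaginary and negative; such terms all lie on the same ray and therefore \emph{cannot cancel}, so the sum vanishes if and only if every summand vanishes. This is precisely claim (iii), and it shows that the zero pattern of $\bA^{\rm tri}$ is the edge set of the graph obtained from $\bA$ by contracting the triangle $\{b_{n-2},b_{n-1},b_n\}$.

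Claim (ii) for \cref{hyp:inductive} then follows by assembling three facts. The off-diagonal admittances of $\bA^{\rm tri}$ are sums of inductive line admittances, hence inductive by the same non-cancellation argument; the aggregated shunt $\sum_{i=n-2}^n A_{i,i}$ is inductive and nonzero whenever any triangle shunt is nonzero, so since $\bA$ has at least one nonzero diagonal entry, so does $\bA^{\rm tri}$; and connectivity is inherited because contracting three mutually adjacent nodes of a connected graph leaves it connected (equivalently, irreducibility of $\bQ^{\rm tri}$ is preserved). With \cref{hyp:inductive} established for the reduced network, \cref{lemma:invertibleadjacency} gives invertibility of $\bQ^{\rm tri}$, completing claim (i) and making \cref{eq:trredcurrent} and \cref{eq:trredpower} well defined.

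The remaining and most delicate point is \cref{hyp:lossless} for the reduced network. Losslessness is immediate: every reduced line admittance is a sum of purely imaginary admittances and is therefore purely imaginary. Net-power balance is the genuine obstacle, and I expect to handle it in the spirit of the proof of \cref{theorem:iteration}(iv): the current aggregation $\bC^{\rm tri}=\bP\bC$ is constructed so that the total injected current is conserved, and for any lossless loopy Laplacian the complex net power $\sum_j S_j$ satisfies $\overline{\sum_j S_j} = -\sum_j S_j$, i.e. it is pure imaginary, so active power balances automatically. The subtle part is the reactive balance, since solving $\bV^{\rm tri} = (\bQ^{\rm tri})^{-1}\bC^{\rm tri}$ does not merely restrict the original voltages; the hard part will be to argue that current aggregation preserves $\sum_j S_j$ exactly in the regime where the three triangle nodes carry equal voltage --- the physically motivated ``similar nominal voltage'' assumption underlying the reduction --- with the general case inheriting balance under the same approximation already invoked for \cref{hyp:lossless}.
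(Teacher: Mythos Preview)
Your approach via the aggregation operator $\bP$ with $\bQ^{\rm tri}=\bP\bQ\bP^\T$ and $\bC^{\rm tri}=\bP\bC$ is a genuinely different and cleaner route than the paper's. The paper argues claim~(i) by a direct physical computation: it interprets $\sum_{i,j=n-2}^n Q_{i,j}$ as the cancellation of the internal triangle admittances, verifies elementwise that $A^{\rm tri}_{n-2,n-2}=\sum_{i=n-2}^n A_{ii}$, and then checks diagonal dominance of $\bQ^{\rm tri}$ row by row before invoking \cref{lemma:invertibleadjacency}. Your identity $\bP^\T\b1=\b1$ gives the same shunt-preservation and diagonal-dominance conclusions in one line, and your formulation makes claim~(iii) fall out immediately from the off-diagonal block $\sum_{k} Q_{i,k}\be_j$ together with the non-cancellation of inductive admittances. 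The paper simply declares claim~(iii) ``trivial by \cref{def:linearaggregation},'' which tacitly uses exactly the non-cancellation argument you spell out; your version is the more complete one.

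On \cref{hyp:lossless} you have correctly located the soft spot. The paper's argument for net-power preservation is essentially ``currents are preserved, the circuit is electrically equivalent, hence net power is preserved by construction''; it does not address the fact that $\bV^{\rm tri}=(\bQ^{\rm tri})^{-1}\bC^{\rm tri}$ need not coincide with $\bV_{[\alpha]}$, so $\sum_j V^{\rm tri}_j\thickbar{C}^{\rm tri}_j=\sum_j V_j\thickbar{C}_j$ is not automatic. Your observation that equality is exact precisely when $\bV$ lies in the range of $\bP^\T$ (equal triangle voltages) is correct: if $\bV=\bP^\T\bW$ then $\bQ^{\rm tri}\bW=\bP\bQ\bP^\T\bW=\bP\bC=\bC^{\rm tri}$, whence $\bW=\bV^{\rm tri}$ and $\sum S^{\rm tri}_j=\bW^\T\bP\thickbar{\bC}=\bV^\T\thickbar{\bC}=0$. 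Outside that regime the balance is only approximate, in the same spirit as the remark following \cref{hyp:lossless}. So you are not missing an argument that the paper has; rather, you have been explicit about a step the paper takes for granted. If you want your write-up to match the paper's level of rigor on this point, it suffices to state, as the paper effectively does, that the reduced lines are lossless (being sums of inductive admittances) and that the current aggregation is designed to conserve total injected current, whence the reduced network is power-flow equivalent ``by construction''; your sharper equal-voltage remark is a welcome addition rather than a defect.
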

\begin{proof}
In a linear circuit, we can equivalently combine parallel edges into a single edge simply by taking the sum of the parallel line admittances to be the line admittance for the single, reduced line. In particular, all lines that connect the reduced triangle to the external network are defined this way via equation \cref{eq:trinodaladmittance} --- the off diagonal elements of row $n-2$ of $\bQ^{\rm tri}$ are equal to the negative of the sum of all line admittances exterior to the triangle.  Notice that the term $\sum_{i,j=n-2}^n Q_{i,j}$ equals the sum of all elements in $\bQ_{(\alpha,\alpha)}$.  In particular, this sum cancels out all copies of the line admittances internal to the triangle, i.e., $A_{n-2,n-1}, A_{n-2,n}, A_{n-1,n}$, while leaving all other summands unaffected.  Thus, by \cref{eq:loopyelement}, the sum $\sum_{i,j=n-2}^n Q_{i,j}$ combines all of the line admittances exterior to the triangle $\{b_{n-2},b_{n-1},b_n\}$ and the sum of the shunt admittances for each node.  

The above shows that $\bQ^{\rm tri}$ is a well defined loopy Laplacian, and that the circuit between the reduced triangle and all other nodes is electrically equivalent to the unreduced network.  Similarly, modeling the loads within the triangle formed by nodes $\{b_{n-2},b_{n-1},b_n\}$ as shunt admittances, or self loops, we can equivalently define the new load in the aggregated triangle as follows.  Each of the loads within the three nodes $\{b_{n-2}, b_{n-1}, b_n\}$ become parallel self loops in the reduced circuit, as do the internal edges to the triangle $\{b_{n-2},b_{n-1}\},$ $\{b_{n-2},b_n\},$ and $\{b_{n-1},b_n\}$.  However, unlike the loads, the self loops corresponding to the internal edges $\{b_{n-2},b_{n-1}\},$ $\{b_{n-2},b_n\},$ and $\{b_{n-1},b_n\}$ both draw and re-inject power.  Therefore, the electrically equivalent circuit of self loops of the reduced triangle must cancel the admittances of the edges $\{b_{n-2},b_{n-1}\},$ $\{b_{n-2},b_n\},$ and $\{b_{n-1},b_n\}$.   The electrically equivalent self loop for the super node representing the reduced triangle thus has line admittances equal to the sum of the shunt admittances for the three nodes $\{b_{n-2},b_{n-1},b_n\}$.  By equation \cref{eq:adjacencyelement}, we see that the shunt admittance for the reduced triangle super node is equal to
\begin{align}\begin{split}
A^{\rm tri}_{n-2, n-2} &= -\left(\sum_{j=1}^n \sum_{i=n-2,\hspace{1mm} i\neq j }^n A_{i,j}\right) + \sum_{j=1}^n \sum_{i=n-2}^n A_{i,j} \\
& = \sum_{i=n-2}^n A_{ii}
\end{split}\end{align}
Thus the circuit within and without the reduced triangle are electrically equivalent to the unreduced network.

We take the lines in the triangle reduced network to be lossless as they are simply the combination of lines in the unreduced network.  Note that the currents for all nodes in the reduced network outside of the collapsed triangle, and the total current within the super node representing the reduced triangle, are preserved by construction, equaling those of the unreduced network.    Therefore, net power is preserved via equations \cref{eq:trredcurrent} and \cref{eq:trredpower} by construction, provided they are well defined.  Particularly, the inverse $\bQ^{\rm tri}$ adjusts the nodal voltages consistently in the reduced network according to Ohm's law, with the constraints of the preserved currents and the electrically equivalent circuit. 

Consider thus, the diagonal of $\bQ_{[\alpha,\alpha]}$ is dominant in $\bQ^{\rm tri}$, as for each row $i< n-2$, $Q^{\rm tri}_{i,n-2}$ is simply the sum of the column elements in positions $n-2, n-1$ and $n$ in $\bQ$.  Likewise, the diagonal element $Q^{\rm red}_{n-2,n-2}$ is dominant by construction, and the diagonal dominance of $\bQ$ implies the diagonal dominance of $\bQ^{\rm tri}$.  Let $\bA^{\rm tri}$ be defined by equation \cref{eq:adjacencyelement}.  We assume at least one element $A_{ii}\neq 0$, and this must also hold for $\bA^{\rm tri}$ by construction, verified by the relationship between the elements in equations \cref{eq:loopyelement} and \cref{eq:adjacencyelement}.  By construction, the associated graph is connected, and the elements of $\bA^{\rm tri}$ must be negative imaginary by the definition of equation \cref{eq:adjacencyelement}.  Therefore $\bA^{\rm tri}$ satisfies \cref{hyp:inductive} such that, by \cref{lemma:invertibleadjacency}, $\bQ^{\rm tri}$ is invertible.  This proves statement $(i)$ above, and thus statement $(ii)$.  Statement $(iii)$ above is trivial by \cref{def:linearaggregation}. 
\end{proof}

\begin{cor}\label{cor:recursive_triangle}
Let $b_1, b_2\in \mathbf{triN}$, the network reduced via \cref{alg:degree1}, \cref{alg:nodetoedge}, \cref{alg:collapselasso} and \cref{alg:triangular}.  There exists an edge between $b_1$ and $b_2$ if and only if there exists a path from $b_1$ to $b_2$ in $\{\bN, \bE\}$ such that all interior nodes in the path belong to $\bN \setminus \mathbf{triN}$. That is, \cref{alg:degree1}, \cref{alg:nodetoedge}, \cref{alg:collapselasso} and \cref{alg:triangular} preserve graph paths. 
\end{cor}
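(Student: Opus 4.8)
The plan is to factor the composite map $\bN \to \dtwo\bN \to \mathbf{triN}$ and reduce everything to a path-preservation statement for \cref{alg:triangular} alone. Since \cref{cor:deg2graphpath} already certifies that every edge of $\dtwo\bE$ corresponds to a path in $\{\bN,\bE\}$ with interior in $\bN\setminus\dtwo\bN$ and conversely, it suffices to prove the analogous equivalence one level up: $\{b_1,b_2\}\in\mathbf{triE}$ if and only if there is a path from $b_1$ to $b_2$ in $\{\dtwo\bN,\dtwo\bE\}$ whose interior nodes all lie in $\dtwo\bN\setminus\mathbf{triN}$. Chaining the two equivalences is then bookkeeping: I would take such a $\dtwo$-path, expand each of its edges into the corresponding original-network path guaranteed by \cref{cor:deg2graphpath}, and concatenate. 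The result is a path in $\{\bN,\bE\}$; each expanded sub-path has interior in $\bN\setminus\dtwo\bN$, while the intermediate endpoints of the $\dtwo$-path lie in $\dtwo\bN\setminus\mathbf{triN}$, so the whole interior is contained in $\bN\setminus\mathbf{triN}$ (using $\mathbf{triN}\subseteq\dtwo\bN$), and no survivor of $\mathbf{triN}$ slips into the interior.

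The essential new content is the triangular equivalence, and the main structural obstacle is that, as noted earlier, \cref{alg:triangular} is \emph{not} a Kron reduction, so Theorem III.4 of \cite{dorfler2013kron} cannot be invoked as it was in \cref{cor:deg2graphpath} and I must argue directly on the edge set. I would induct on the number of elementary triangle collapses performed by \cref{alg:triangular}, taking as inductive hypothesis the edge-iff-path correspondence between the current partially reduced graph and $\{\dtwo\bN,\dtwo\bE\}$. For one collapse of a triangle $\{c_1,c_2,c_3\}$ into the base node $c_1$, reading the inner loops of \cref{alg:triangular} yields a clean local description: edges not incident to $\{c_1,c_2,c_3\}$ are untouched, and afterward the super node $c_1$ is adjacent to a node $w$ precisely when $w$ was adjacent to at least one of $c_1,c_2,c_3$ beforehand, with self and double lines discarded. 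The forward implication (every reduced edge reflects a genuine connection) is immediate: a newly created edge $\{c_1,w\}$ arises from some $\{c_i,w\}$ with $c_i\in\{c_2,c_3\}$, and since $\{c_1,c_i\}$ is a triangle edge, $c_1-c_i-w$ is a path through the eliminated node $c_i$; composing with the inductive hypothesis exhibits every $\mathbf{triE}$-edge as a $\dtwo$-path through eliminated nodes.

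The reverse implication is the delicate part, and it is where I expect the real difficulty to concentrate. The subtlety is that the nodes eliminated in a collapse are absorbed into the \emph{surviving} base node $c_1$, so a path that enters and exits the same collapsed cluster must be read, after reduction, as passing through the survivor $c_1$ rather than as a shortcut between its two external endpoints. The argument must therefore track, for every eliminated node, the representative super node into which it is absorbed, and verify that consecutive interior nodes either share a representative adjacent to both flanking nodes or form a genuine chain whose endpoints inherit the edge; only then does collapsing the interior produce a single $\mathbf{triE}$-edge. This representative-tracking is compounded by the fact that clusters grow across iterations — a node eliminated in a later collapse may be the base node of an earlier one, and \cref{alg:triangular} can both raise and lower degrees and re-route edges between passes — so the induction must carry the cluster-to-representative assignment as part of its hypothesis. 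Once this assignment is maintained consistently and the one-collapse correspondence is checked against it, the induction closes and chaining with \cref{cor:deg2graphpath} as in the first step completes the proof.
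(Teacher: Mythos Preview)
Your decomposition through $\dtwo\bN$ and your forward implication for the triangle step match the paper exactly: the paper invokes \cref{cor:deg2graphpath} for the first factor and then declares the triangle step ``trivial'' from the construction in \cref{def:linearaggregation} (equivalently, from \cref{lemma:welldefinedtriangle}(iii)). Your induction on collapses is a correct elaboration of what the paper leaves implicit for that direction.

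The gap is in the reverse implication, and it is not one that more careful representative-tracking will close. Take a single collapse of a triangle $\{c,x,y\}$ into the base $c$, and suppose $x$ was adjacent in the pre-collapse graph to two nodes $p,q$ that survive into $\mathbf{triN}$ with $p,q\neq c$. The path $p\!-\!x\!-\!q$ has interior $\{x\}\subset\bN\setminus\mathbf{triN}$, yet by \cref{def:linearaggregation} the block $\bQ_{[\alpha,\alpha]}$ is left untouched, so no edge $\{p,q\}$ is created; the edges $\{p,x\}$ and $\{x,q\}$ are rerouted to $\{p,c\}$ and $\{c,q\}$ instead. This is exactly the phenomenon you flag when you say the path ``must be read as passing through the survivor $c_1$'' --- but that survivor lies in $\mathbf{triN}$, so it cannot serve as an interior node of the hypothesized path, and the biconditional breaks. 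Your representative-tracking would at best establish that $\{b_1,b_2\}\in\mathbf{triE}$ iff there is an $\{\bN,\bE\}$-path whose interior nodes are each absorbed into the cluster represented by $b_1$ or by $b_2$; that is a strictly stronger hypothesis than ``interior in $\bN\setminus\mathbf{triN}$'', and the corollary as stated does not follow from it. The paper's one-line appeal to the construction does not address this either --- only the forward implication is actually supported by \cref{lemma:welldefinedtriangle}(iii).
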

\begin{proof}
By \cref{cor:deg2graphpath}, we know that the statement is true for the degree two reduced network.  By the construction in \cref{def:linearaggregation}, the statement is trivial under a single iteration of the triangular reduction, and thus holds for the network given by $\mathbf{triN}$.
\end{proof}

\begin{theorem}
The sequential network reductions produced via \cref{alg:degree1}, \cref{alg:nodetoedge}, \cref{alg:collapselasso} and \cref{alg:triangular} produce a reduced order model which is power-flow equivalent, for a lossless, inductive, steady state network, and preserves graph paths.
\end{theorem}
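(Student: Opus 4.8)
The plan is to treat this final statement as a \emph{composition} result: essentially all of the analytical content has already been established in the preceding lemmas and corollaries, and the task is to assemble these pieces into a single coherent claim about the full sequence of four algorithms. I would organize the argument into three parts, corresponding to (a) the first three algorithms taken together, (b) the triangular reduction appended to them, and (c) the preservation of graph paths across all four.

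First I would argue that \cref{alg:degree1}, \cref{alg:nodetoedge} and \cref{alg:collapselasso} compose into a single Kron reduction. By \cref{cor:deg1kron}, recursively collapsing degree one nodes agrees with the iterative Kron reduction using $\done\bN$ as reference nodes, and by \cref{cor:deg2kron}, the subsequent degree two and sparsely connected triangle reductions agree with the iterative Kron reduction using $\dtwo\bN$ as reference nodes. Invoking the iterative reduction property established in \cref{theorem:iteration}, these steps chain together so that the network defined by $\dtwo\bN, \dtwo\bE$ is exactly the Kron reduction of the original network with $\dtwo\bN$ as its reference set. Since the Kron reduction is power-flow equivalent and, since statements (iii) and (iv) of \cref{theorem:iteration} guarantee that it preserves both \cref{hyp:inductive} and \cref{hyp:lossless}, the intermediate network $\dtwo\bN$ is power-flow equivalent to the original and again satisfies the standing hypotheses.

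Second, I would append the triangular reduction. Here the key point, and the one requiring care, is that \cref{alg:triangular} is \emph{not} a Kron reduction, so a distinct notion of power-flow equivalence, furnished by \cref{lemma:welldefinedtriangle}, must be spliced onto the Kron-based equivalence of the first three algorithms. \cref{lemma:welldefinedtriangle} shows that a single linear aggregation of a triangle yields an invertible loopy Laplacian $\bQ^{\rm tri}$ that again satisfies \cref{hyp:inductive} and \cref{hyp:lossless}, with currents at all unreduced nodes and net power preserved by construction. Because each single aggregation returns a network satisfying these same hypotheses, the recursion in \cref{alg:triangular} is well founded: each successive triangle collapse operates on a network to which \cref{lemma:welldefinedtriangle} again applies, so power-flow equivalence and the hypotheses propagate inductively through the entire greedy sweep, regardless of the voltage and degree thresholds.

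Finally, graph-path preservation is immediate from \cref{cor:recursive_triangle}, which already covers all four algorithms. The main obstacle, as the intervening remark anticipates, is conceptual rather than computational: one must verify that the two different flavors of ``power-flow equivalent'' --- the Kron/Schur-complement equivalence governing the first three steps and the electrical-circuit equivalence of \cref{lemma:welldefinedtriangle} governing the fourth --- chain together without gaps. This hinges entirely on the fact that every individual step returns a network satisfying \cref{hyp:inductive} and \cref{hyp:lossless}, which is precisely what makes the two equivalence notions interoperable and allows the induction to close.
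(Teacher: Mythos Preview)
Your proposal is correct and takes essentially the same approach as the paper: the paper's proof simply cites \cref{cor:deg2kron} for the first three algorithms and then \cref{lemma:welldefinedtriangle} together with \cref{cor:recursive_triangle} for the triangular step. Your version is more explicit---you separately invoke \cref{cor:deg1kron} and spell out the induction on triangle collapses and the interoperability of the two equivalence notions---but the logical skeleton is identical.
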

\begin{proof}
\cref{cor:deg2kron} demonstrates that this holds for the degree two reduced network, and \cref{lemma:welldefinedtriangle} and \cref{cor:recursive_triangle} show that the triangular reduction can be iterated upon this network, preserving graph paths and maintaining the net power-flow equivalence.
\end{proof}

\section{Visualization of the reduced network}
\label{sec:visualization}

In this section, we discuss methods of graph visualization for the models produced by the degree two reductions and the triangular reductions.  Following Wong et. al. \cite{wong2009novel}, we choose to visualize the network based on its graph characteristics as in the GreenGrid visualization package. Rather than visualizing our network by the geographic information, a graph theoretic layout can better represent dynamical coupling and grid vulnerabilities. The classical force directed layout technique uses a spring and repulsion model where each node is a repelling body and the edges are represented by springs \cite{ulrik2001, kobourov2013}. Initial node positions are chosen randomly and the n-body problem is solved until the positions of nodes stabilize. This pseudo physical model can be utilized to represent electric grid physics by parameterizing spring lengths via the line admittances and node repulsion with the nominal voltages of nodes \cite{wong2009novel}.

We utilize the JavaScipt library vis.js \cite{visjs} to perform interactive visualizations. The default graph layout uses uniform spring lengths and repulsion parameters, and implementing a parametrization scheme that reflects the nodal voltage and line admittances in the \emph{reduced network models} is the subject of future work. The underlying ForceAtlas2 model \cite{jacomy2014} in vis.js is used to resolve the spring repulsion evolution. We produce a conceptual visualization of the clustering performed via \cref{alg:triangular} as follows: (i) first generate node positions for $\mathbf{triN}$, using the 138 degree voltage threshold and setting $\mathbf{dThr}=8$, and resolve the ForceAtlas2 model until node positions stabilize; (ii) fix these node locations and assign the initial position for every node in $\mathbf{d2N}$ as its clustered position in $\mathbf{triN}$; resolve the ForceAtlas2 model until node positions of $\mathbf{d2N}$ stabilize. \cref{fig:net_vis} demonstrates a realization of this de-clustering: the left hand plot shows the initial positions for the nodes in $\mathbf{d2N}$; the middle plot describes an intermediate point in their evolution as node positions are released; the right hand plot visualizes the stabilized $\mathbf{d2N}$ positions. This de-clustering visualization demonstrates how the degree threshold maintains qualitative graph features during the reduction.  Specifically, in the visualizations we see the emergence of densely connected coherent areas of nodes, sparsely connected by inter-area lines.

\begin{figure}[h]\label{fig:net_vis}
\center
\includegraphics[width=.8\linewidth]{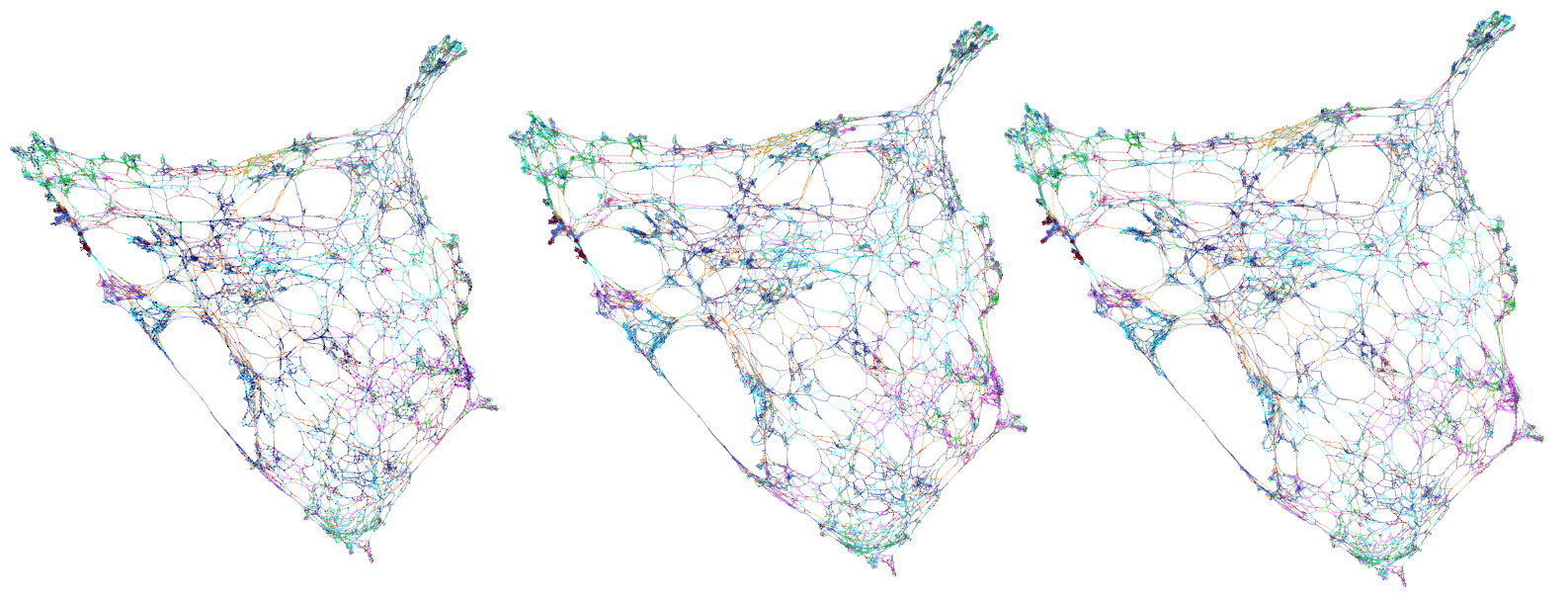}
\caption{\textbf{Left}: ForceAtlas2 initialized for $\mathbf{d2N}$, $\mathbf{d2E}$ with the clustered positions in $\mathbf{triN}$, $\mathbf{triE}$, using $\mathbf{vThr=8},\mathbf{dThr=8}$. \textbf{Middle}: the node positions are propagated by the ForceAtlas2 model. \textbf{Right}: positions stabilize.}
\end{figure}

\section{Conclusions}
\label{sec:conclusions}
Analysis of our test network demonstrates that our reductions meet the goals stated in \cref{sec:alg}. Firstly, our graph based approach to network reduction preserves network topological features such as graph paths and sparsity.  Although the degree one and degree two reductions fundamentally change the degree distribution of the original network, this bias is justified by the physical coherence of the nodes collapsed, and their associated design features.  Moving to the triangular reductions, we find parameter regimes with the 138 KV voltage threshold which make physically meaningful nodal aggregations, while preserving the degree distribution of the degree two reduced network, and the overall sparsity.  

Each of our algorithms are proven to produce a power-flow equivalent network, under the hypothesis of a lossless, inductive, steady state network, allowing for a physically meaningful qualitative analysis of synchronization, optimal power-flow and control in the transient stability regime.  Moreover, by a sequential, recursive design, our procedure allows a partial reconstruction of the full network from a sequence of intermediate reduced models with varying levels of resolution: efficient use of data structures allows the user to reconstruct sequential reductions and reintroduce complex network features. Finally, we demonstrate the potential for interactive visualization of the reduced model for qualitative study of network sensitivities. As an additional step, one may use the graph based visualization to represent the dynamical coupling in the reduced network, using the (clustered) nodal voltage to represent repulsion and (meta-)edge admittances to represent spring parameters \cite{wong2009novel}. Visualizing the reduced network this way preserves and even distinguishes major qualitative features of the original model, using the visually comprehensible reduced network.

\begin{figure}
\small
\center
\begin{tabular}[h]{|| l | l | l | l || }
\hline
$\rvert \mathbf{N} \rvert = 53,155$ & $\rvert \mathbf{d1N} \rvert = 32,891$  & $\rvert \mathbf{d2N}\rvert = 9,716$ & $\rvert \mathbf{triN}\rvert = 7,252$ \\
\hline
$\rvert \mathbf{E} \rvert = 63,832$ & $\rvert \mathbf{d1E} \rvert = 43,568$  & $\rvert \mathbf{d2E}\rvert = 18,700$ & $\rvert \mathbf{triE}\rvert = 14,152$ \\
\hline
\end{tabular}
\caption{Test network reduction summary, $\mathbf{triN}, \mathbf{triE}$ correspond to the smallest realization under the 138 kv voltage threshold and degree 8 threshold.}
\end{figure}

Although we have shown analytically that, under \emph{ideal conditions}, the power-flow in the reduced network will be equivalent to the full network, we have not yet performed dynamical simulations to test the limits of this equivalence.  Specifically, we have not treated the realistic scenarios of non-static voltages, stochasticity in the generation and loads, mid-term stability regimes or the learning problem for the dynamic swing equations \cite{lokhov2018online}, where we must estimate the damping and inertial parameters for each of the aggregated nodes in the reduced network.  Additionally, while our topological reductions intuitively appear to be consistent with other reduction methodologies such as slow coherency \cite{chow1985time,chow2013power}, we have yet to make a quantitative comparison of the methods, to determine in what ways these are complementary.  Each one of the above questions is relevant for our ultimate goal of designing computationally efficient reduced order models for online state and parameter estimation, and will be the subject of future work.

\appendix

\section{Data structures and inverting reductions}
\label{sec:data}
Allowing users to refine the reduced network structure is basic to our algorithm design. We expand in detail the data storage of generalized trees, edges and collapsed triangles. The recursion in \cref{alg:nodetoedge} and \cref{alg:collapselasso} implies that edge and generalized tree data structures can be multilayered, containing multiple levels of sub-edges or sub-trees. Proceeding from the bottom layer to the top, and from right to left within lists, one can recover the reverse sequence of mappings to reconstruct a node. An example interactive visualization is available in web browsers \cite{grudzienpersonalpage}, demonstrating the de-clustering performed in \cref{fig:net_vis}. We likewise release our reduction scripts and toy data describing the full and reduced network node and edge sets, with voltage information in an arbitrary, per unit representation \cite{grudziengithub2017}.

\subsection{Tree data}\label{sec:treedata}

Tree reductions are called by a field $t\_b_1$ where $b_1$ is the terminal node of the collapse in \cref{alg:degree1}.
Each field returns a list of arrays, each array corresponding to a branch collapsed to the root node $b_1$.
The first position of each array describes the end leaf of the branch and each subsequent position describes the shortest path in the network to the terminal node. \cref{fig:treecollapse} corresponds to the list
\begin{align}\label{eq:tree}
\mathbf{d1H}(t\_b_1) =& \left\{[b_2, b_1], [b_3, b_1],[b_6, b_5, b_4, b_1], [b_7, b_5, b_4, b_1]\right\},
\end{align}
where leaves are reintroduced by following the path described in the array.  Each leaf in the tree $t\_b_1$ can be re-introduced by its shortest path to $b_1$, described in equation \cref{eq:tree}. The terminal node $b_1$ has at least two lines connecting it to the remaining network.

\subsection{Edge data}

Let $b_2$ be a node of degree two, and suppose it is connected to $b_1$ and $b_3$. The basic mapping produced by \cref{alg:nodetoedge} takes $b_2$ to the edge $\{b_1,b_3\}$. We represent this map by the array $[b_1, b_2, b_3]$ where, without loss of generality, we assume that $b_1 < b_3$. Given such a sequence of mappings
\begin{align}\label{eq:edge}
\mathbf{d2H}(e\_b_1\_b_5) = \left\{[b_1,b_2,b_3], [b_1,b_3,b_4], [b_1,b_4,b_5]\right\}
\end{align}
we may reconstruct the original string by following the mappings from right to left in the list. Equation \cref{eq:edge} describes the line of nodes in \cref{fig:edgemap}. If $b_4$ is the terminal node for a generalized tree, \cref{alg:nodetoedge} stores the associated generalized tree data in the list at the point of the reduction. The list is thus given
\begin{align}\label{eq:edgetree}
 \mathbf{d2H}(e\_b_1\_b_5) = \left\{[b_1,b_1,b_3], [b_1,b_3,b_4], [b_1,b_4,b_5], \{``t\_b_4\text{''}: \text{``tree data ''}\} \right\}
\end{align}
Generalized trees embedded in an edge are reconstructed by reintroducing the terminal node from the edge data and reconstructing the generalized tree as described in \cref{sec:gentree}. An edge in $\mathbf{d2H}$ may contain an arbitrary length sequence of edge and tree reductions, possibly multilayered. Each meta-edge may therefore be represented in multiple ways by different orders of mappings, but each map can be inverted sequentially to reconstruct the original network, regardless of the order.

\subsubsection{Generalized tree maps}\label{sec:gentree}
\cref{lemma:gentree} demonstrates that a sparsely connected triangle is collapsed if and only if \cref{alg:nodetoedge} produces a degree one node. Generalized tree data, therefore, includes the sequence of nodes mapped to edges which precipitate collapse of the triangle. The field $t\_b_n$ corresponds to the node, $b_n$, that the generalized tree has been collapsed to. Suppose as in \cref{fig:gentree}, mapping $b_1$ to the edge $\{b_2,b_3\}$ produces a degree one node in $\mathbf{d2N}$. \cref{alg:collapselasso} collapses degree one nodes recursively until every node is again at least degree two. Let $b_k$ be the terminal node of this collapse, then \cref{alg:collapselasso} stores a hashable map as the first entry of $t\_b_k \in \mathbf{d2H}$, followed by the array with the path from $b_2$ to the terminal node
\begin{align}
\mathbf{d2H}\left(t\_{b_k}\right) = \{\{ ``e\_b_2\_b_3\text{''} : [b_2, b_1, b_3]\}, [b_2, b_3, \cdots, b_k]\}
\end{align}
In general, the edge data precipitating the collapse of the sparsely connected triangle can be of arbitrary length and contain multiple layers.

\subsection{Triangular reductions}

Due to the more arbitrary nature of the mapping  \cref{alg:triangular}, we take a simple approach to track the reductions. The field $tri\_b_1 \in \mathbf{triH}$ corresponds to a list where each entry is a hashable map of the form $\{ ``b_j\text{''} : \mathbf{lines}(b_j)\}$. The value $\mathbf{lines}(b_j)$ is the list of lines associated to $b_j$ in $\mathbf{d2E}$. In this way, one can reintroduce a node from a collapsed triangular configuration by writing the node $b_j$ into $\mathbf{triN}$ and reconnecting this node with the appropriate edges from $\mathbf{d2E}$, while removing these edges from $b_1$ if the lines were formed uniquely by joining $b_j$ to the cluster.

\bibliographystyle{siamplain}
\bibliography{references}
\end{document}